\documentclass[11pt]{article}
\usepackage{amsmath,amssymb,amsthm}
\usepackage{newtxtext,newtxmath}
\usepackage{geometry}
\usepackage{authblk}
\usepackage[official,right]{eurosym}
\usepackage{graphicx}
\usepackage{appendix}
\usepackage{warpcol}   
\usepackage{enumerate} 
\usepackage{rotating}
\usepackage{tabularx}
\usepackage{caption}
\usepackage{array}
\usepackage{multirow}  
\usepackage{pifont}   
\usepackage{lscape}
\usepackage{setspace}  
\usepackage{pb-diagram} 
\usepackage{fancybox}
\usepackage[colorlinks=true, citecolor=blue, linkcolor=blue, urlcolor=blue]{hyperref}
\usepackage[style=apa]{biblatex} 
\usepackage{booktabs} 
\usepackage{expdlist}
\usepackage{xcolor}
\usepackage{float}
\usepackage{mathtools}
\usepackage{subfigure}
\theoremstyle{plain}
\newtheorem{theorem}{Theorem}
\newtheorem{lemma}{Lemma}
\theoremstyle{definition}

\newtheorem{remark}{Remark}

\usepackage{tikz}
\usetikzlibrary{shapes.geometric, arrows, fit, positioning, decorations.pathreplacing}

\tikzstyle{block} = [rectangle, minimum width=4cm, minimum height=1cm, text centered, draw=black, fill=blue!20]
\tikzstyle{arrow} = [thick,->,>=stealth]
\title{A minimum-risk and cost-efficient two-sample sequential testing framework for the shifted exponential models with application to precipitation data}
\author[1]{Ashwani Rajput\thanks{ORCID: \href{https://orcid.org/0009-0006-8847-8325}{\texttt{0009-0006-8847-8325}}}}
\author[1]{Neeraj Joshi\thanks{ORCID: \href{https://orcid.org/0000-0001-9017-4255}{\texttt{0000-0001-9017-4255}}}\thanks{\textbf{Corresponding Author:} Neeraj Joshi, Department of Mathematics, Indian Institute of Technology Delhi, Hauz Khas, New Delhi, 110016, India. Email: \href{mailto:njoshi@maths.iitd.ac.in}{\texttt{njoshi@maths.iitd.ac.in}}}}
\affil[1]{Department of Mathematics, Indian Institute of Technology Delhi, New Delhi, India}
\date{\today}
\begin{document}
\maketitle
\noindent\textbf{Abstract:} This paper investigates the problem of comparing the location parameters of two shifted exponential models through a novel double sequential sampling framework. The proposed hypothesis testing procedure is developed by controlling the type I error probability at a preassigned level while minimizing a loss function that incorporates both the type II error probability and the associated sampling cost. The corresponding optimal fixed-sample-size expressions are shown to depend on unknown scale parameters, rendering the desired testing accuracies unattainable in practice under fixed-sample designs. To overcome this difficulty, a double sequential sampling procedure is proposed to test the difference between location parameters when the scale parameters are unknown and unequal. The proposed methodology is shown to possess desirable asymptotic properties, including first-order efficiency, second-order efficiency, and second-order risk efficiency. Extensive simulation studies and a real-data application that involves heavy precipitation episodes at meteorological stations demonstrate the practical effectiveness and applicability of the proposed procedure.
\bigskip\\
\noindent\textbf{Keywords:} Cost, Exponential distribution, Hypothesis testing, Loss function, Minimum risk, Precipitation, Sequential sampling.   
\section{Introduction}\label{Intro}
\noindent Comparison of time-to-event characteristics in two treatment groups or populations is crucial in many applied domains including medicine, environmental sciences, reliability engineering, industrial quality control, etc. For example, in medical experiments, the clinician may be interested in determining whether a new drug is more effective in treating patients than an existing drug. In environmental studies, an experimenter may like to compare the inter-arrival patterns of natural phenomenon, such as precipitation episodes, across different geographical locations. Statistical inference with respect to such investigations requires the use of suitable stochastic models that can capture the minimum guaranteed durations or inter-arrival patterns of events across different populations. In practical scenarios, the exponential distribution is widely utilized to model inter-arrival times due to its analytical tractability and relevance in studies involving constant (or near constant) event occurrence rates. In certain experiments, the event of interest often exhibits a minimum duration before its next occurrence. For example, due to prevailing atmospheric circumstances, successive precipitation episodes in a location may be separated by a baseline minimum inter-arrival period. In such cases a two-parameter or shifted exponential distribution appears to be a more realistic model because it contains a location parameter representing a baseline inter-arrival shift or minimum guaranteed duration. Consequently, statistical techniques for comparing minimum guaranteed durations or location parameters in exponentially distributed populations continue to be highly relevant in the literature. Readers may refer to \textcite{KumarPatel1971}, \textcite{RanganathanKale1979}, \textcite{BayoudKittaneh2016}, and \textcite{KrishnamoorthyXia2017} for an overview of two-sample comparison problems involving shifted exponential models.
\bigskip\\
\indent Inferential techniques for the aforementioned comparison problems are often designed within fixed-sample frameworks, selected to meet the desired estimation and testing accuracies. However, sometimes these prescribed sample sizes are quite large and can result in significant experimental costs, longer observation periods, and inefficient utilization of available resources. In such scenarios, researchers are naturally motivated to adopt optimal alternative methods. This is where sequential sampling designs can help in an efficient manner. Sequential techniques allow data collection one-by-one or in stages, thereby reducing unnecessary observations while achieving desired inferential accuracies. Another difficulty in designing fixed-sample statistical techniques is that the associated optimal sample size under given inferential constraints depends on the unknown nuisance parameter(s), creating a barrier in matching the desired statistical inference. This major practical challenge motivates the use of adaptive sequential strategies that can fine-tune the necessary sample size while updating the estimates of nuisance parameter stepwise. When one-by-one sequential sampling seems difficult, practitioners may adopt the idea of collecting data sequentially in groups or stages for operational convenience. Motivated by these challenges, some researchers have developed sequential and multistage estimation and testing procedures for two-sample comparison problems under shifted exponential models. For a brief review, see \textcite{MukhopadhyayHamdy1984}, \textcite{MukhopadhyayDarmanto1988}, \textcite{MukhopadhyayPadmanabhan1993}, \textcite{IsogaiFutschik2010}, \textcite{IsogaiUno2018}, \textcite{MukhopadhyayBapat2016}, and \textcite{mukhopadhyay_aloufi_2024}. 
\bigskip\\
\indent Recently, \textcite{zhuang_bapat_2022} considered the problem of testing the difference in locations of two shifted exponential models using sequential techniques. Specifically, the authors developed two-stage
and purely sequential procedures to determine the appropriate sample sizes while controlling both type I and type II error probabilities at preassigned levels. Similar kind of testing frameworks can be found in \textcite{mukhopadhyay_zhuang_2019} and \textcite{RajputJoshi2025}, however the practical context is different in these papers. The methodologies proposed in the above-mentioned papers are quite useful in controlling error probabilities while preserving other statistical accuracies. Nevertheless, these papers do not explicitly focus on achieving both error control and cost optimization simultaneously within a unified decision-theoretic framework. This significant literature gap motivates us to re-look at the sequential two-sample exponential testing framework in the present paper.
\bigskip\\
\indent Our contributions in this research are summarized below.
\begin{enumerate}
    \item To combine power performance and economical sampling, we propose a novel loss function which incorporates type II error probability together with the sampling cost. We fix the type I error probability at a predetermined low level while minimizing the expected value of this combined loss function.
    \item Since the optimal sample size arising from minimizing the expected loss (or risk) depends on the unknown scale parameters, the required precision cannot be achieved with a fixed sample statistical technique. Hence, we develop a novel double sequential sampling approach to compare minimum inter-arrival durations arising from two independent shifted exponential populations. This technique helps us to reduce the sampling operations substantially while achieving the other optimality criteria.
    \item We deduce several theoretical optimality criteria (such as first- and second-order efficiency and risk efficiency) for the proposed testing framework and achieve the same while working with simulated datasets. 
    \item As an illustration, we present an extensive case study related to precipitation episodes in two different locations. The proposed real-data illustration highlights the applied flavor of the proposed testing framework.
\end{enumerate}
  
  \indent The subsequent sections of the paper are structured as follows. In Section \ref{sec2}, we describe a real-data scenario related to precipitation episodes to motivate the proposed testing problem. The statistical model and formulation of the hypothesis testing problem are described in Section \ref{sec3}. We establish that this problem cannot be tackled with any fixed sample procedure, and a sequential technique is required to deal with it. A novel double sequential sampling strategy with the associated asymptotic properties is the essence of Section \ref{sec4}. In Section \ref{sec5}, we present an extensive simulation study to assess the precision of the proposed theoretical properties under various parameter configurations. We also present a brief power comparison and sensitivity analysis in Section \ref{sec5}. The precipitation data presented in Section \ref{sec2} is studied in detail in Section \ref{sec6} in the context of our proposed testing problem. Finally, we conclude the paper in Section \ref{sec7} with a brief discussion and future scope of the present research. All theoretical proofs are provided in \hyperref[Appendix]{Appendix}. 

\section{A Motivating Real Data Example on the Heavy Precipitation Episodes}\label{sec2}
To motivate the proposed methodology and illustrate its practical relevance, we consider a real-world data set involving the occurrence of heavy precipitation episodes. Data are obtained from the Global Historical Climatology Network-Daily (GHCN-Daily or GHCNd) archives maintained by the National Centers for Environmental Information (NCEI). The analysis is conducted for the following two meteorological stations located in the Pacific Northwest region of the United States:
\begin{itemize}
    \item Seattle-Tacoma International Airport, Washington, USA (Station ID: USW00024233),  
    \item Portland International Airport, Oregon, USA (Station ID: USW00024229).
\end{itemize}
Figure \ref{location_map} displays the geographical locations of the two stations, both of which are influenced by maritime climatic conditions and exhibit pronounced wet-season precipitation patterns. The study focuses on the wet season that spans October through April for the years 2000-2026. Daily precipitation measurements are recorded in the variable \texttt{PRCP}, expressed in tenths of millimeters. A day is classified as a heavy precipitation day whenever
$\texttt{PRCP} \geq 200,$
corresponding to a precipitation amount of at least \(20\) mm.
\bigskip\\
\indent Heavy precipitation associated with a single meteorological system may persist over several consecutive days. As a result, individual exceedances are grouped into clusters, with each cluster representing heavy precipitation episodes. To reduce short-range temporal dependence among exceedances, a de-clustering procedure is employed in which consecutive heavy precipitation days, or exceedances separated by at most two intervening non-exceedance days, are treated as belonging to the same episode. Thus, two episodes are regarded as independent only when they are separated by at least three non-heavy precipitation days.
\bigskip\\
\indent The primary quantity of interest is the inter-arrival time between successive independent heavy precipitation episodes. For each wet season, the inter-arrival time is defined as the number of days between the end of one episode and the beginning of the next episode. Restricting the analysis to individual wet seasons avoids artificially inflated waiting times arising from the relatively dry summer months. The corresponding station records are publicly available through the NCEI archive at the following link:
\[
{\texttt{https://www.ncei.noaa.gov/data/global-historical-climatology-network-daily/access/}}
\]
\indent These datasets provide a motivating real-world setting for studying the stochastic behavior of inter-arrival times associated with heavy precipitation episodes. In particular, an important practical question is whether the minimum inter-arrival times differ significantly between the two stations. This comparison is of considerable practical importance for understanding the patterns of extreme precipitation events in geographically similar regions. It is naturally desired to perform such a comparison using an optimal statistical hypothesis testing approach. A detailed analysis of this problem using the proposed sequential testing framework is presented in Section \ref{sec6}.

\section{The Problem Formulation}\label{sec3}
\noindent Consider two independent two-parameter exponential random variables $X$ and $Y$ with probability density functions (PDFs), $f(x;\mu_1,\sigma_1)$ and $f(y;\mu_2,\sigma_2)$, respectively, where
\begin{equation}
\label{popn_pdf}
    f(t;\mu,\sigma) = \frac{1}{\sigma}\exp\left\{-\frac{1}{\sigma}(t-\mu)\right\} I(t \geq \mu),\,\ \sigma >0,\,\ \mu >0.
\end{equation}
Here, $I(\cdot)$ is the indicator function and the parameters $\mu$ and $\sigma$ are the unknown location and scale parameters, respectively. Suppose that we take $n_1$ samples from population $X$ and $n_2$ samples from population $Y$, and let $\boldsymbol{n} = (n_1, n_2)$. Then, the maximum likelihood (ML) estimators for $\mu_1$ and $\mu_2$ are given by
\begin{equation}
\label{Min_order_stat}
    X_{n_1(1)} = \min(X_1, X_2, \dots, X_{n_1})
    ~~~\text{and}~~~ 
    Y_{n_2(1)} = \min(Y_1, Y_2, \dots, Y_{n_2}).
\end{equation}
\noindent Also, for $n_1\geq2$ and $n_2\geq2$, uniformly minimum variance unbiased estimators (UMVUEs) of $\sigma_1$ and $\sigma_2$ can be obtained as
\begin{equation}
    U^{(1)}_{n_1} = \frac{1}{n_1-1}\sum\limits_{i=1}^{n_1} \left(X_i-X_{n_1(1)}\right)
~~~\text{and}~~~
    U^{(2)}_{n_2} = \frac{1}{n_2-1}\sum\limits_{i=1}^{n_2} \left(Y_i-Y_{n_2(1)}\right).
\end{equation}
\noindent The PDF of $R_{\boldsymbol{n}}= X_{n_1(1)}-Y_{n_2(1)}- (\mu_1-\mu_2)$ can be obtained as
\begin{equation}
\label{R_pdf_in_general}
    f_{R_{\boldsymbol{n}}}(r) = 
    \begin{cases} 
        \left(\frac{\sigma_1}{n_1}+\frac{\sigma_2}{n_2}\right)^{-1}\exp\left(\frac{n_2r}{\sigma_2}\right), &  r \leq 0, \\
        \left(\frac{\sigma_1}{n_1}+\frac{\sigma_2}{n_2}\right)^{-1}\exp\left(\frac{-n_1r}{\sigma_1}\right), &  r > 0.
    \end{cases}
\end{equation}
\indent We are interested in the following hypothesis testing setup:
\begin{equation}
\label{hypothesis}
    H_0: \mu_1-\mu_2 = \Delta_0 \quad \text{vs} \quad H_1: \mu_1-\mu_2 = \Delta_1 \ (>\Delta_0),
\end{equation}
where $\Delta_0$ and $\Delta_1$ are prespecified. Let us denote the parameter vector as $\boldsymbol{\theta} = (\mu_1, \mu_2, \sigma_1,\sigma_2)$, $\boldsymbol{\mu}=(\mu_1, \mu_2)$ and $\boldsymbol{\sigma}=(\sigma_1,\sigma_2)$. In order to test the hypothesis (\ref{hypothesis}), a logical decision rule can be given as
\begin{equation}
\label{decision_rule}
    \text{Reject $H_0$ if and only if} \,\ X_{n_1(1)}-Y_{n_2(1)} > b,
\end{equation}
where $b$ represents a critical value defined later in \eqref{critical_value_expression}, $X_{n(1)}$ and $Y_{n(1)}$ are as described in \eqref{Min_order_stat}. The probabilities of type I and type II errors associated with the decision rule \eqref{decision_rule} are given by
\begin{equation}
\label{typeI_error_prob}
    \mathbb{P}_{\boldsymbol{\theta}}\left(\text{Type I error}\right)= \mathbb{P}_{\boldsymbol{\theta}}\left(X_{n_1(1)}-Y_{n_2(1)} > b \mid H_0 \right) = \left(\frac{\sigma_1}{n_1}+\frac{\sigma_2}{n_2}\right)^{-1}\frac{\sigma_1}{n_1}\exp\left\{-\frac{n_1(b-\Delta_0)}{\sigma_1} \right\}, 
\end{equation}
\begin{equation}
\label{typeII_error_prob}
    \mathbb{P}_{\boldsymbol{\theta}}\left(\text{Type II error}\right)= \mathbb{P}_{\boldsymbol{\theta}}\left(X_{n_1(1)}-Y_{n_2(1)} < b \mid H_1 \right) = \left(\frac{\sigma_1}{n_1}+\frac{\sigma_2}{n_2}\right)^{-1}\frac{\sigma_2}{n_2}\exp\left\{-\frac{n_2(\Delta_1-b)}{\sigma_2} \right\}. 
\end{equation}
\indent The objective is to control the type I error probability at a prespecified level while simultaneously minimizing the type II error probability, taking into account the associated sampling cost. To achieve this, we introduce the following loss function:
\begin{equation}
\label{loss_function}
        L_{\boldsymbol{n}}(\boldsymbol{\mu},\boldsymbol{c}) = A \left[\mathbb{P}_{\boldsymbol{\theta}}\left(\text{Type II error}\right)\right] + C_{\boldsymbol{n}}(\boldsymbol{\sigma}), 
\end{equation}
where, \( A \) is a fixed known positive constant, $\boldsymbol{c}=(c_1,c_2)$ and \( C_{\boldsymbol{n}}(\boldsymbol{\sigma}) > 0 \) represents the total cost of collecting \( (n_1 + n_2) \) observations, defined as
\begin{equation}
\label{cost_function}
    C_{\boldsymbol{n}}(\boldsymbol{\sigma}) = c_1 n_1 \sigma_1^{-1} + c_2 n_2 \sigma_2^{-1}.
\end{equation}
The form of this cost function depends on the specific problem and is determined by the experimenter. It is natural to assume that the cost per observation varies inversely with its uncertainty; consequently, the cost increases (decreases) as the variance or standard deviation decreases (increases). The choice of the weight parameter \( A \), including its units, should likewise be made by the experimenter in accordance with the context of the problem under consideration. The constant $A$ can be interpreted as the penalty associated with committing a type II error. Consequently, larger values of \(A\) are appropriate in situations where erroneous decisions are costly, while smaller values of $A$ may be preferred when moderate inaccuracies are acceptable or when reducing the sampling cost is of greater importance.
\bigskip\\
\indent The associated fixed sample risk (or expected loss) function is defined as 
\begin{equation}
\label{risk_function}
    \begin{split}
        R_{\boldsymbol{n}}(\boldsymbol{\mu},\boldsymbol{c}) &= E_{\boldsymbol{\theta}}(L_{\boldsymbol{n}}(\boldsymbol{\mu},\boldsymbol{c}))\\
        & =A \left[ \left(\frac{\sigma_1}{n_1}+\frac{\sigma_2}{n_2}\right)^{-1}\frac{\sigma_2}{n_2}\exp\left\{-\frac{n_2(\Delta_1-b)}{\sigma_2} \right\} \right] + c_1 n_1 \sigma_1^{-1} + c_2 n_2 \sigma_2^{-1}.
    \end{split}
\end{equation}
Since the variability in the two populations is governed by their respective scale parameters, it is justified to allocate the sample sizes proportionally. To ensure that the two samples contribute equally to the inference regarding $\mu_1$ and $\mu_2$, we require that the Fisher information contained in the minimum order statistics $X_{n_1(1)}$ and $Y_{n_2(1)}$ about the respective location parameters $\mu_1$ and $\mu_2$ be equal. This condition holds if and only if
\begin{equation}
\label{Optimum_allocation}
    \frac{n_1^2}{\sigma_1^2} = \frac{n_2^2}{\sigma_2^2} \quad \text{or equivalently} \quad \frac{n_1}{n_2} = \frac{\sigma_1}{\sigma_2} = u,
\end{equation}
where $u$ is an unknown positive constant. It may be noted that the same condition also ensures that
\[
\mathbb{E}\,\left(X_{n_1(1)} - Y_{n_2(1)}\right) = \mu_1 - \mu_2.
\]
\noindent We now derive the appropriate value of $b$ while ensuring that
\begin{equation}
    \Delta_0 \leq b \leq \Delta_1.
\end{equation}
In view of \eqref{Optimum_allocation}, the probabilities of type I and type II errors can be expressed as
\begin{equation}
\label{typeI_error_prob_n1}
    \mathbb{P}_{\boldsymbol{\theta}}\left(\text{Type I error}\right)= \mathbb{P}_{\boldsymbol{\theta}}\left(X_{n_1(1)}-Y_{n_2(1)} > b \mid H_0 \right) = \frac{1}{2}\exp\left\{-\frac{n_1(b-\Delta_0)}{\sigma_1} \right\}, 
\end{equation}
\begin{equation}
\label{typeII_error_prob_n_1}
    \mathbb{P}_{\boldsymbol{\theta}}\left(\text{Type II error}\right)= \mathbb{P}_{\boldsymbol{\theta}}\left(X_{n_1(1)}-Y_{n_2(1)} < b \mid H_1 \right) = \frac{1}{2}\exp\left\{-\frac{n_1(\Delta_1-b)}{\sigma_1} \right\}. 
\end{equation}
Similarly, the cost function can be expressed as
\begin{equation}
\label{cost_function_n1}
    C_{n_1}(\sigma_1) = (c_1+ c_2) n_1 \sigma_1^{-1}.
\end{equation}
We require that the type I error probability be controlled at a 
prespecified level $\alpha$, i.e.,
    \[
    \frac{1}{2}\exp\left\{-\frac{n_1(b-\Delta_0)}{\sigma_1} \right\} =(\leq)\ \alpha,
    \]
which provides a reasonable choice for $b$ as 
\begin{equation}
\label{b_exp_1}
    b = \Delta_0 - \frac{\sigma_1}{n_1} \log(2\alpha).
\end{equation}
Thus, the type II error probability can be written as 
\[
\mathbb{P}_{\boldsymbol{\theta}}\left(\text{Type II error}\right) = \frac{1}{2}\exp\left\{-\frac{n_1}{\sigma_1}(\Delta_1-\Delta_0) - \log(2\alpha) \right\},
\]
and the associated fixed sample risk function can be defined as 
\begin{equation}
\label{risk_function_n1}
    \begin{split}
        R_{n_1}(\boldsymbol{\mu},\boldsymbol{c}) &= E_{\boldsymbol{\theta}}(L_{n_1}(\boldsymbol{\mu},\boldsymbol{c}))\\
        & =A \left[ \frac{1}{2} \exp \left\{ -\frac{n_1}{\sigma_1}(\Delta_1-\Delta_0) - \log(2\alpha) \right\}  \right] + (c_1+c_2)n_1\sigma_1^{-1}.
    \end{split}
\end{equation}
\indent To simplify optimization, we treat $n_1$ as a continuous variable and differentiate the risk function in \eqref{risk_function}, which is convex in $n_1$, with respect to $n_1$. Setting the derivative equal to zero, we obtain
\[
-\frac{A}{2} \exp \left\{ -\frac{n_1}{\sigma_1}(\Delta_1-\Delta_0) - \log(2\alpha) \right\} \left(\frac{\Delta_1-\Delta_0}{\sigma_1}\right) + (c_1+c_2) \sigma_1^{-1} = 0,
\]
which gives us the expression of optimal fixed sample size $n_1^*$ as follows.
\begin{equation}
\label{optimal_size_1}
    n_1 \equiv n_1^*= \frac{\sigma_1}{(\Delta_1-\Delta_0)} \log \left\{\frac{A(\Delta_1-\Delta_0)}{4\alpha(c_1+c_2)}\right\}.
\end{equation}
Putting this value of $n_1$ in \eqref{b_exp_1}, we get
\begin{equation}
\label{critical_value_expression}
    b = \Delta_0 - \frac{(\Delta_1-\Delta_0) \log (2\alpha)}{\log \left\{\frac{A(\Delta_1-\Delta_0)}{4\alpha(c_1+c_2)}\right\}}.
\end{equation}
Thus, under the optimal allocation rule given in \eqref{Optimum_allocation}, the optimal sample size \( n_2^* \) can be obtained as
\begin{equation}
\label{optimal_size_2}
    n_2 \equiv n_2^*  = \frac{\sigma_2}{(\Delta_1-\Delta_0)} \log \left\{\frac{A(\Delta_1-\Delta_0)}{4\alpha(c_1+c_2)}\right\},
\end{equation}
and the corresponding minimum risk is
\begin{equation}
    R_{\boldsymbol{n^*}}(\boldsymbol{\mu},\boldsymbol{c}) = \left(\frac{c_1+c_2}{\Delta_1-\Delta_0}\right)\left[1+\log\left\{\frac{A(\Delta_1-\Delta_0)}{4\alpha(c_1+c_2)}\right\}\right],
\end{equation}
where, $\boldsymbol{n^*}=(n_1^*,n_2^*)$.
\bigskip\\
\indent Based on \eqref{optimal_size_1} and \eqref{optimal_size_2}, it follows that the optimal fixed sample sizes are given by $n_1^{*}$ and $n_2^{*}$ when $\sigma_1$ and $\sigma_2$ are known. However, in practice, both $\sigma_1$ and $\sigma_2$ are unknown, therefore optimal sample sizes $n_1^{*}$ and $n_2^{*}$ cannot be determined using any fixed sample procedure. Therefore, it becomes necessary to adopt an appropriate sequential or multistage sampling strategy to address this problem. Such procedures allow the sample sizes to be adjusted according to the observed variability, thereby ensuring uniform control over the estimation error.
\bigskip\\
\indent A wide range of sequential and multistage sampling procedures have been proposed in the literature, each characterized by distinct methodologies, along with their respective advantages and limitations. A comprehensive discussion of these procedures can be found in \textcite{mukhopadhyay_2009}. In the present study, we employ a double sequential sampling procedure to estimate the required sample sizes and achieve the desired statistical accuracy.

\section{A Double Sequential Sampling Strategy}
\label{sec4}
Following \textcite{hu2022double}, we propose the following methodology to collect the data: we begin with initial samples of sizes $m(\geq 2)$, denoted by $X_1, \ldots, X_{m}$ and $Y_1, \ldots, Y_{m}$, drawn from populations $X$ and $Y$, respectively, as defined in \eqref{popn_pdf}. Furthermore, we consider
\begin{equation}
    d_\alpha = \frac{1}{(\Delta_1-\Delta_0)} \log \left\{\frac{A(\Delta_1-\Delta_0)}{4\alpha(c_1+c_2)}\right\}.
\end{equation}
Then, for $i = 1,2$, we fix the constants $\rho_i \in (0,1)$ and integers $k_i(\geq 2)$. The parameters $\rho_1$ and $\rho_2$, as well as $k_1$ and $k_2$ can be taken to be equal if desired. Subsequently, $k_i$ additional observations are collected at each stage sequentially until the following stopping criterion is met for the first time
\begin{equation}
    \label{Stage_1}
    L_i\equiv L_i(k_i,\rho_i) = \inf \left\{ n \geq 0 : m + k_in \geq \rho_i\ U^{(i)}_{m+k_in}d_{\alpha} \right\}.
\end{equation}
The total sample size at the stopping point is 
\[
T_i\equiv T_i(k_i,\rho_i) = m + k_iL_i.
\]
We then proceed by collecting observations sequentially, one at a time, until the following stopping rule \eqref{Stage_2} is satisfied for the first time
\begin{equation}
    \label{Stage_2}
    N_i\equiv N_i(k_i,\rho_i) = \inf \left\{ n \geq T_i : n \geq U^{(i)}_{n}d_{\alpha} \right\}.
\end{equation}
The sampling procedure presumes that the observations obtained from each of the two populations are collected independently of each other. Let $\boldsymbol{N} = (N_1, N_2)$ denote the vector of sample sizes. It follows that $\mathbb{P}(N_i < \infty) = 1$ as $q = \log \left(\frac{1}{c_1+c_2}\right) \to \infty$ (see \cite{chow_robbins_1965}). Upon termination of the procedure, we obtain the final samples $\{X_1, \ldots, X_{N_1};\, Y_1, \ldots, Y_{N_2}\}$. Based on these final samples, we implement the following decision rule for hypothesis testing, as defined in \eqref{decision_rule}
\begin{equation}
\label{decision_rule_2}
    \text{Reject $H_0$ if and only if} \,\ X_{N_1(1)}-Y_{N_2(1)} > b,
\end{equation}
where $b$ is specified in \eqref{critical_value_expression}.
\bigskip\\
\indent Note that by taking $k_i = 1$ for $i = 1,2$ in the proposed procedure, we obtain the corresponding purely sequential sampling scheme, given by  
\begin{equation}
    \label{purely_seq_rule}
    N_i' \equiv N_i(1,\rho_i) = \inf \left\{ n \geq m : n \geq U^{(i)}_{n} d_{\alpha} \right\}.
\end{equation}
\indent Furthermore, we assume that the following conditions hold:
\begin{equation}
\label{pilot_sample_assumption}
    m = k_im_{i0} + 1,\,\ \text{where } m_{i0} > \frac{2}{k_i} \text{ is an integer, and }\ \frac{m}{n_i^*} < \rho_i\ .
\end{equation}
It may be noted that, in practice, although it is preferable to choose $m$ and $k_i$ such that $m - 1 \equiv 0 \pmod{k_i}$, this condition is not essential and may be relaxed. Therefore, minor deviations from this requirement are generally not a matter of concern.
\bigskip\\
\indent Before moving directly to the optimal asymptotic properties of the proposed sequential rule, we wish to highlight the need of such a double sequential sampling framework with a brief review of the literature in the following section.

\subsection{Why Double Sequential Sampling?}\label{sec4.2}
\noindent \textcite{stein_1945} introduced a two-stage sampling procedure requiring only two sampling stages. However, in Stein's procedure, the difference between the average sample number and the optimal fixed sample size is not asymptotically bounded, which may lead to substantial oversampling. To address this limitation, \textcite{anscombe_1953}, \textcite{chow_robbins_1965}, and \textcite{starr_1966} (ACR) developed the theory of sequential estimation. In such procedures, observations are collected one at a time, which can be operationally demanding in practice, although these methods possess desirable asymptotic properties. Nevertheless, ACR-type designs typically involve a large number of sampling operations, making them potentially expensive to implement. In an effort to combine the advantages of two-stage and sequential methodologies, \textcite{hall_1983} proposed an accelerated sequential sampling procedure. This approach reduces the number of sampling stages by a predetermined factor while incurring only a finite increase in the total number of observations. Specifically, the procedure begins with a purely sequential phase to obtain a preliminary estimate, followed by a single-stage collection of the remaining observations in a batch.
\bigskip\\
\indent More recently, in a similar spirit of reducing the sampling effort, \textcite{mukhopadhyay2020groupseq} introduced a group sequential sampling scheme in which observations are collected in batches of size $k$, rather than individually. Although this approach reduces the number of sampling operations, it may result in oversampling since additional observations could be collected at the termination stage. To mitigate this issue, \textcite{hu2022double} proposed a double sequential sampling scheme. In this design, sampling is initially conducted in groups of size $k$ and subsequently observations are collected one-at-a-time. This hybrid approach achieves sample sizes equivalent to purely sequential procedures, while reducing the overall sampling operations. The effectiveness of the proposed double sequential sampling procedure for the comparison problem considered in this paper is demonstrated through Theorems \ref{thm1}-\ref{thm3}, as well as the data analysis presented in Sections \ref{sec5} and \ref{sec6}, respectively.
\bigskip\\
\indent In practical applications, the selection of $k$ and $\rho$ should be carried out carefully to maintain an appropriate balance between operational feasibility and sampling cost. When the cost per observation is high, it is generally preferable to choose a smaller value of $k$ to limit excessive oversampling, together with a moderate or relatively large value of $\rho$, depending on the desired level of operational convenience. In contrast, when both time and operational considerations are of greater importance, larger values of $k$ and $\rho$ may be adopted, as they can substantially reduce the number of sampling operations and improve overall efficiency.

\subsection{Asymptotic Properties}\label{sec4.1}
\noindent We begin by establishing a preliminary result (Lemma \ref{lemma1}), which will be instrumental in deducing the first- and second-order asymptotic properties (Theorems \ref{thm1}--\ref{thm3}) of the procedure described in \eqref{Stage_1}-\eqref{Stage_2}. In particular, these properties include the criterion of efficiency and consistency for the proposed double sequential sampling rule.
\begin{lemma}
\label{lemma1}
    Let $N_i'$ be the stopping rule corresponding to the purely sequential sampling scheme \eqref{purely_seq_rule}, and let $T_i$ be the stopping rule defined in \eqref{Stage_1}. Then, for $s>0$, we have
    \[
    \mathbb{P}_{\boldsymbol{\theta}}(N_i' < T_i) = O(n_i^{*-s/2}),\ \text{ provided that }  m>\frac{s}{2}+1.
    \]
\end{lemma}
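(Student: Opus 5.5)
The idea is that $N_i'<T_i$ can only happen if the purely sequential rule \eqref{purely_seq_rule} stops at some $n$ strictly before the first-stage stopping time $T_i$ of \eqref{Stage_1} has been reached. On that event, at the stopping time $n=N_i'$ we have $n\geq U^{(i)}_n d_\alpha$, while $T_i>n$ means that at every grid point $m+k_i j\le n$ the first-stage inequality failed. In particular it failed at the largest grid point $t\le n$, so $t<\rho_i U^{(i)}_t d_\alpha$. Because $\rho_i<1$, at least one of the estimates $U^{(i)}$ must then deviate from $\sigma_i$ by a fixed proportion. Writing $n_i^*=\sigma_i d_\alpha$, we get
\[
\{N_i'<T_i\}\subseteq \bigcup_{m\le n<\infty}\Big\{ n\ge U^{(i)}_n d_\alpha,\ \ n<\rho_i U^{(i)}_{t(n)} d_\alpha\Big\}, \qquad t(n)=\max\{m+k_ij\le n\}.
\]
I would split this event by the size of $n$ relative to $n_i^*$. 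Fix a small $\epsilon>0$ with $\rho_i(1+\epsilon)<1-\epsilon$ (possible since $\rho_i<1$).

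\emph{Case $n\le \epsilon n_i^*$} (early stopping). The event forces $U^{(i)}_n\le n/d_\alpha\le \epsilon\sigma_i$. Using $(n-1)U^{(i)}_n/\sigma_i\sim$ Gamma$(n-1,1)$ (the normalized spacings), the bound $\mathbb P(U^{(i)}_n\le \epsilon\sigma_i)\le (C\epsilon)^{n-1}$-type estimates, together with the Markov-type bound $\mathbb P(\chi\le x)\le C x^{n-1}$ at $x=n/(\sigma_i d_\alpha)=n/n_i^*$, give
\[
\sum_{m\le n\le \epsilon n_i^*}\mathbb P\big(U^{(i)}_n\le n\sigma_i/n_i^*\big)\le \sum_{n\ge m} C_n\,(n/n_i^*)^{\,n-1}.
\]
This is dominated by the $n=m$ term, which is $O(n_i^{*-(m-1)})$. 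That term is $O(n_i^{*-s/2})$ because $m-1>s/2$, and this is exactly where the hypothesis $m>\frac s2+1$ is used. This is the standard Woodroofe/Mukhopadhyay argument for $\mathbb P(N\le \epsilon n^*)$.

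\emph{Case $n>\epsilon n_i^*$.} Here $t(n)\ge n-k_i+1$ is also of order $n_i^*$. The two inequalities together give
\[
U^{(i)}_n\le \frac{n}{d_\alpha}<\rho_i U^{(i)}_{t(n)}.
\]
Hence either $U^{(i)}_{t(n)}>(1+\epsilon)\sigma_i$ or $U^{(i)}_n<\rho_i(1+\epsilon)\sigma_i<(1-\epsilon)\sigma_i$. So the probability is bounded by
\[
\mathbb P\Big(\sup_{j\ge \epsilon n_i^*-k_i}|U^{(i)}_j-\sigma_i|>\epsilon'\sigma_i\Big).
\]
Since $U^{(i)}_j-\sigma_i$ is, up to an $O(1/j)$ correction, a reverse martingale (an average of i.i.d. Exp$(\sigma_i)$ spacings), I will apply the maximal inequality of Doob together with the moment bound $E|U_j-\sigma_i|^{2r}=O(j^{-r})$. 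This yields $O(n_i^{*-r})$ for every $r$, and in particular $O(n_i^{*-s/2})$.

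The main obstacle is the first case: making the small-$n$ sum rigorous and uniform. One has to bound $\sum_n \mathbb P(\text{Gamma}(n-1,1)\le (n-1)n/n_i^*)$ over $m\le n\le\epsilon n_i^*$ so that the total is still $O(n_i^{*-(m-1)})$. I would handle this by writing the probability as $\le (e\,n/n_i^*)^{n-1}$ via Chernoff and checking that the terms decay geometrically once $n/n_i^*\le\epsilon<1/e$. The assumption $m/n_i^*<\rho_i$ in \eqref{pilot_sample_assumption} ensures the first stage does not trivially stop at $m$, but it is not otherwise needed.
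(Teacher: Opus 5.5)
Your argument is correct, but it takes a different route from the paper. The paper never analyzes the joint event. It inserts the deterministic level $(1+\epsilon)\rho_i n_i^*$ with $(1+\epsilon)\rho_i<1$ and writes $\mathbb{P}(N_i'<T_i)\le \mathbb{P}(N_i'\le (1+\epsilon)\rho_i n_i^*)+\mathbb{P}(T_i>(1+\epsilon)\rho_i n_i^*)$. It takes the first term from Woodroofe's lower-tail estimate $O(n_i^{*-(m-1)})$. It bounds the second term by Markov's inequality, using uniform integrability of $|(\rho_i n_i^*)^{-1/2}(T_i-\rho_i n_i^*)|^s$; this requires applying Woodroofe's theory to the grouped rule with $\chi^2_{(2k_i)}$ increments, and it is where $m>s/2+1$ enters.

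You instead decompose pathwise by the value of $N_i'$. For early stopping you re-derive the lower-tail bound from scratch via Chernoff. For late stopping you use a maximal inequality, which gives $O(n_i^{*-r})$ for every $r$ with no condition on $m$. The paper's version is shorter because it black-boxes both tail estimates. Yours is self-contained and actually yields the sharper $\mathbb{P}(N_i'<T_i)=O(n_i^{*-(m-1)})$. It also shows that the pilot-size condition is needed only to rule out premature stopping of $N_i'$, not to control the overshoot of $T_i$.

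There are three points to tighten.

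First, failure of the first-stage rule at $t(n)$ gives $t(n)<\rho_i U^{(i)}_{t(n)}d_\alpha$, not $n<\rho_i U^{(i)}_{t(n)}d_\alpha$. Since $t(n)\ge n-k_i+1$, you only get $U^{(i)}_n<\rho_i U^{(i)}_{t(n)}+(k_i-1)/d_\alpha$. This is harmless, because $(k_i-1)/d_\alpha\to 0$ and you chose $\rho_i(1+\epsilon)<1-\epsilon$ strictly.

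Second, the reverse-martingale step needs the joint-in-$n$ representation of $\{(n-1)U^{(i)}_n/\sigma_i\}_{n\ge 2}$ as partial sums of i.i.d. $\mathrm{Exponential}(1)$ variables; this is the representation underlying \eqref{woodroofe_eqn}. The fixed-$n$ normalized-spacings fact does not suffice, since the spacings change with $n$. With the joint representation, $U^{(i)}_n/\sigma_i$ is exactly a running mean, hence exactly a reverse martingale with no $O(1/j)$ correction. Doob's inequality applied to $|U^{(i)}_j/\sigma_i-1|^{2r}$ from $j\approx\epsilon n_i^*$ onward then gives your $O(n_i^{*-r})$.

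Third, the terms $(en/n_i^*)^{n-1}$ need not decrease monotonically for $\epsilon$ near $1/e$. Instead, bound each term by $(en/n_i^*)^{m-1}(e\epsilon)^{n-m}$; the sum is then $O(n_i^{*-(m-1)})$ whenever $\epsilon<1/e$.
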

\begin{theorem}
\label{thm1}
    For the double sequential sampling scheme defined in \eqref{Stage_1}-\eqref{Stage_2} along with the decision rule defined in \eqref{decision_rule} for the hypothesis testing problem \eqref{hypothesis} with fixed $\Delta_0, \Delta_1, k_i\geq 2, 0<\rho_i< 1$, we have as $q = \log \left(\frac{1}{c_1+c_2}\right) \to \infty$,
    \begin{enumerate}[(i)]
        \item $ \mathbb{E}_{\boldsymbol{\theta}}(N/n^*) \longrightarrow 1\,\ [\text{first-order efficiency}],$
        \item $\mathbb{E}_{\boldsymbol{\theta}}(N_i) = n_i^* + a_2 + o(1), \,\ \text{if } m>5/2\ \ [\text{second-order efficiency}]$,
    \end{enumerate}
    where $a_2 \approx -1/4 $.
\end{theorem}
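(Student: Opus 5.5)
The plan is to reduce everything to the purely sequential rule $N_i'$ in \eqref{purely_seq_rule} and import its known first- and second-order behaviour. The bridge is the observation that $N_i = N_i'$ on the event $\{N_i' \geq T_i\}$.

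\textbf{Reduction to $N_i'$.} Both rules use the same boundary $n \geq U^{(i)}_n d_\alpha$. Under \eqref{pilot_sample_assumption} we have $T_i \geq m$, so the first $n \geq m$ crossing the boundary coincides with the first $n \geq T_i$ crossing it, unless that crossing occurs before $T_i$. When $N_i' < T_i$, we still have $N_i \leq \max(T_i, N_i')$ plus the sequential overshoot beyond $T_i$. I will therefore write
\[
\mathbb{E}_{\boldsymbol{\theta}}(N_i) = \mathbb{E}_{\boldsymbol{\theta}}(N_i') + \mathbb{E}_{\boldsymbol{\theta}}\bigl\{(N_i - N_i') I(N_i' < T_i)\bigr\}
\]
and show that the second term is $o(1)$.

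To bound the remainder, I will use Cauchy--Schwarz, or Hölder, in the form $\mathbb{E}\{N_i I(N_i'<T_i)\} \leq \{\mathbb{E}(N_i^2)\}^{1/2}\,\mathbb{P}(N_i'<T_i)^{1/2}$.
\begin{itemize}
\item For the moment factor, $N_i \leq T_i + N_i^{\dagger}$, where $N_i^{\dagger}$ is a Chow--Robbins type variable. Moreover, $T_i \leq \rho_i d_\alpha U^{(i)}_{T_i - k_i} + k_i + m$ by the definition of \eqref{Stage_1}. Since $U$ has all moments, these give $\mathbb{E}(N_i^2) = O(n_i^{*2})$.
\item For the probability factor, Lemma \ref{lemma1} gives $\mathbb{P}(N_i' < T_i) = O(n_i^{*-s/2})$ whenever $m > s/2 + 1$.
\end{itemize}
The product is $O(n_i^{*1 - s/4})$. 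Choosing $s > 4$ would require $m > 3$, which is stronger than the stated $m > 5/2$, so I will sharpen the estimate. On $\{N_i' < T_i\}$ one has $N_i - N_i' \leq T_i + O_p(1)$, so the relevant moment is actually $T_i - N_i'$ rather than $N_i$ itself. Alternatively, I can apply Hölder with a high moment of $N_i/n_i^*$, which is $O(1)$, together with $\mathbb{P}^{1-1/p}$. Taking $s$ close to $3$, which Lemma \ref{lemma1} allows under $m > 5/2$, and $p$ large, the remainder is $O(n_i^{*}\cdot n_i^{*-(s/2)(1-1/p)})$. This is still not $o(1)$ in general, so the genuinely needed bound is $\mathbb{E}\{(N_i - N_i')I(\cdot)\}$ with $N_i - N_i'$ controlled by $(1-\rho_i)^{-1}$ times excess quantities. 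I expect this bookkeeping, matching the exponent so that the requirement on $m$ is exactly $m > 5/2$, to be the main obstacle. My first attempt will bound $(N_i - N_i')$ by $|T_i - n_i^*| + |N_i' - n_i^*|$, whose second moments are $O(n_i^*)$ by the CLT for $U$. Cauchy--Schwarz then gives $O(n_i^{*1/2}\, n_i^{*-s/4}) = o(1)$ for $s > 2$, i.e.\ $m > 2$, which is comfortably within $m > 5/2$.

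\textbf{Second-order term.} I will write $U^{(i)}_n = \sigma_i (n-1)^{-1}\sum_{j=1}^{n-1} Z_j$ with $Z_j$ i.i.d.\ standard exponential, using the normalized-spacings representation. Then $N_i'$ has the form $\inf\{n \geq m : \sum_{j=1}^{n-1} Z_j \leq (n-1)n/n_i^*\}$, to which the Woodroofe / Mukhopadhyay nonlinear renewal theorem applies. It yields $\mathbb{E}(N_i') = n_i^* + \eta + o(1)$ with
\[
\eta = \tfrac12 - \tfrac12 - 1 - \sum_{n\geq1} n^{-1}\mathbb{E}\{(\chi^2_{2n} - 4n)^+\}/2 \approx -1/4 ,
\]
valid when $m > 5/2$. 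I will identify $a_2 = \eta$, noting that the $-1$ shift coming from $n-1$ is absorbed in the standard computation.

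\textbf{First-order efficiency.} Part (i) then follows directly from (ii), or independently: $N_i/n_i^* \to 1$ a.s.\ by the SLLN for $U$ and the sandwich inequality $U^{(i)}_{N_i} d_\alpha \leq N_i \leq m + k_i + U^{(i)}_{N_i-1} d_\alpha$. Uniform integrability of $N_i/n_i^*$ follows from Wiener's dominated ergodic theorem, which gives $\sup_n U^{(i)}_n$ integrable. Finally, $N/n^*$ is read componentwise, or via $N_1 + N_2$.
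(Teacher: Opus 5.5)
\textbf{Gap in the remainder bound.} Your overall plan matches the paper's: the identity $\mathbb{E}_{\boldsymbol{\theta}}(N_i)=\mathbb{E}_{\boldsymbol{\theta}}(N_i')+\mathbb{E}_{\boldsymbol{\theta}}\{(N_i-N_i')I(N_i'<T_i)\}$, Lemma~\ref{lemma1} for the bad event, and Woodroofe's expansion for $N_i'$. The gap is in bounding that remainder: the estimate you finally commit to is false, for two reasons.
\begin{enumerate}[(a)]
\item On $\{N_i'<T_i\}$ you only know $N_i\ge T_i$, and $N_i$ may exceed $T_i$. So $N_i-N_i'\le |T_i-n_i^*|+|N_i'-n_i^*|$ is unjustified; the honest split has $|N_i-n_i^*|$ in place of $|T_i-n_i^*|$.
\item More decisively, $T_i$ is centred at $\rho_i n_i^*$ with $\rho_i<1$ fixed, not at $n_i^*$. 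Hence $\mathbb{E}_{\boldsymbol{\theta}}(T_i-n_i^*)^2\sim(1-\rho_i)^2 n_i^{*2}$, not $O(n_i^*)$; the CLT for $U$ controls $T_i-\rho_i n_i^*$, not $T_i-n_i^*$.
\end{enumerate}
With the correct order, Cauchy--Schwarz again gives only $O(n_i^{*})\,O(n_i^{*-s/4})$, i.e.\ the condition $m>3$ you were trying to avoid. The claimed sufficiency of $m>2$ does not follow.

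\textbf{The route you discarded works, and it is the paper's.} Your own domination $N_i\le m+k_i+d_\alpha\sup_{n\ge2}U^{(i)}_n$, together with Doob's $L^p$ maximal inequality for the reverse martingale $U^{(i)}_n=\sigma_i(n-1)^{-1}\sum_{j<n}Z_j$, gives $\{\mathbb{E}_{\boldsymbol{\theta}}(N_i^p)\}^{1/p}=O(n_i^*)$ for every $p$. Hölder then yields $\mathbb{E}_{\boldsymbol{\theta}}\{N_i I(N_i'<T_i)\}=O\bigl(n_i^{*\,1-(s/2)(1-1/p)}\bigr)$. Your remark that this is ``still not $o(1)$'' is an arithmetic slip: the exponent is negative once $s>2p/(p-1)$. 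For example, $p=3$ and $s>3$ works, and Lemma~\ref{lemma1} permits $s>3$ exactly when $m>5/2$.

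The paper takes $p=3$ and writes $[\mathbb{E}(T_i^3)]^{1/3}[\mathbb{P}(N_i'<T_i)]^{2/3}$. Bounding by $N_i$ instead of $T_i$, as your domination allows, is the more careful version, since $N_i-N_i'\le T_i$ need not hold.

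\textbf{The constant.} Since $N_i'=N_i'^*+1$, Woodroofe gives $\mathbb{E}_{\boldsymbol{\theta}}(N_i')=n_i^*+\nu-1+o(1)$ with $\nu=1-\tfrac12\sum_{n\ge1}n^{-1}\mathbb{E}\{(\chi^2_{2n}-4n)^+\}$. Thus $a_2=-\tfrac12\sum_{n\ge1}n^{-1}\mathbb{E}\{(\chi^2_{2n}-4n)^+\}\approx-0.25$. Your displayed $\eta$ carries an extra $-1$ and evaluates to about $-1.25$.

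Your direct proof of (i), via the sandwich inequality and the integrable maximal function, is fine and cleaner than deducing it from (ii).
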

    
\begin{theorem}
\label{thm2}
    The asymptotic second-order expansion (as $q \to \infty$) of the type I error probability associated with the decision rule \eqref{decision_rule} for the hypothesis testing problem \eqref{hypothesis}, under the same parameter settings as in Theorem \ref{thm1}, is given by
    \[
    \mathbb{P}_{\boldsymbol{\theta,N}}(\text{Type I error})
    = \mathbb{E}_{\boldsymbol{\theta}}\left[
    \left(\frac{\sigma_1}{N_1}+\frac{\sigma_2}{N_2}\right)^{-1}
    \frac{\sigma_1}{N_1}
    \exp\left\{-\frac{N_1(b-\Delta_0)}{\sigma_1} \right\}
    \,\middle|\, H_0
    \right]= \alpha + b_1 + o(q^{-1}),
    \]
    provided that $m \geq 8$, where $b_1\approx \frac{\alpha}{d_\alpha}\left(\frac{3}{8\sigma_1}-\frac{3}{8\sigma_2}+\frac{l_1^2}{2\sigma_1}+\frac{3l_1}{4\sigma_1}-\frac{l_1}{32\sigma_1\sigma_2d_\alpha}\right)$ and $l_1=d_\alpha (b-\Delta_0)=-\log(2\alpha)$.

\end{theorem}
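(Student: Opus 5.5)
The plan is a second-order Taylor expansion of the conditional type I error around the optimal sizes, using the moment expansions already available for the stopping times. Write the type I error as $\mathbb{E}_{\boldsymbol{\theta}}[g(N_1,N_2)]$ with
\[
g(x,y)=\frac{\sigma_1/x}{\sigma_1/x+\sigma_2/y}\exp\left\{-\frac{x(b-\Delta_0)}{\sigma_1}\right\}.
\]
By \eqref{optimal_size_1}, \eqref{optimal_size_2} and \eqref{critical_value_expression} we have $n_i^*=\sigma_i d_\alpha$ and $n_1^*(b-\Delta_0)/\sigma_1=l_1=-\log(2\alpha)$. Hence $g(n_1^*,n_2^*)=\tfrac12 e^{-l_1}=\alpha$. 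It is convenient to write $N_i=n_i^*(1+\varepsilon_i)$, so that the exponent becomes $-l_1(1+\varepsilon_1)$ and the prefactor becomes $(1+\varepsilon_2)/(2+\varepsilon_1+\varepsilon_2)$. I would Taylor expand $g$ in $(\varepsilon_1,\varepsilon_2)$ to second order with a Lagrange remainder. The first-order terms carry $\mathbb{E}(\varepsilon_i)$, and the second-order terms carry $\mathbb{E}(\varepsilon_i^2)$ and $\mathbb{E}(\varepsilon_1\varepsilon_2)$. The cross moment factorizes by independence of the two sampling processes.

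The needed moments come from Theorem \ref{thm1}(ii) and the distributional results used in its proof. First, $\mathbb{E}(N_i)=n_i^*+a_2+o(1)$ gives $\mathbb{E}(\varepsilon_i)=a_2/(\sigma_i d_\alpha)+o(q^{-1})$. Second, I need $n_i^{*-1/2}(N_i-n_i^*)\to N(0,\sigma_i/\sigma_i\cdot 1)$ in distribution, i.e.\ asymptotic variance $n_i^*$ for $N_i$, together with uniform integrability of $(N_i-n_i^*)^2/n_i^*$. Both follow as for the purely sequential rule $N_i'$, because Lemma \ref{lemma1} shows $N_i=N_i'$ except on an event of probability $O(n_i^{*-s/2})$; this is the nonlinear renewal theory of Woodroofe and Lai–Siegmund applied to $U^{(i)}_n$, a mean of $n-1$ i.i.d.\ exponentials. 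This gives $\mathbb{E}(\varepsilon_i^2)=1/n_i^*+o(q^{-1})=1/(\sigma_i d_\alpha)+o(q^{-1})$. Since $d_\alpha\asymp q$, each such term is of exact order $q^{-1}$.

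Collecting the terms: the expansion of $e^{-l_1(1+\varepsilon_1)}$ contributes $l_1^2\mathbb{E}\varepsilon_1^2/2$ and $-l_1\mathbb{E}\varepsilon_1$. The ratio expansion contributes terms in $\varepsilon_2-\varepsilon_1$ and its squares. The product of the two expansions yields the $l_1/\sigma_1$ cross terms. Assembling these with the moments should reproduce $b_1$; the $\tfrac{3}{8}(\sigma_1^{-1}-\sigma_2^{-1})$ and $3l_1/(4\sigma_1)$ pieces arise from combining $a_2\approx-1/4$ with the quadratic terms. The term with $d_\alpha^{-2}$ in $b_1$ is lower order and would be recorded only as part of the stated approximation.

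The main obstacle is controlling the remainder. The third-order terms involve $\mathbb{E}|\varepsilon_i|^3$ multiplied by $g$-derivatives evaluated at random intermediate points, and on the event $\{N_i\ \text{small}\}$ these derivatives can blow up. I would split on the event $\{N_i\geq \tfrac12 n_i^*,\ i=1,2\}$. On this event the derivatives are bounded and $\mathbb{E}|\varepsilon_i|^3=O(q^{-3/2})=o(q^{-1})$. On its complement, $g\leq 1$ is bounded, so it suffices to show $\mathbb{P}(N_i<\tfrac12 n_i^*)=o(q^{-1})$. For this I would bound $\mathbb{P}(N_i<\tfrac12n_i^*)$ by $\mathbb{P}(N_i'<\tfrac12 n_i^*)+\mathbb{P}(N_i'<T_i)$. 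The first term is handled by the standard bound $O(n_i^{*-(m-1)/2})$ for the lower tail of sequential stopping times, and the second by Lemma \ref{lemma1}. Both are $o(q^{-1})$ for large $m$, which is where the condition $m\geq 8$ enters: it also guarantees uniform integrability of the cubic moments.
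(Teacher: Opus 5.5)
Your expansion is the same as the paper's. With $U=1+\varepsilon_1$ and $V=1+\varepsilon_2$ you are expanding $\frac{V}{U+V}e^{-l_1U}$ about $(1,1)$, and your moments $\mathbb{E}\varepsilon_i=a_2/(\sigma_i d_\alpha)+o(q^{-1})$, $\mathbb{E}\varepsilon_i^2=1/(\sigma_i d_\alpha)+o(q^{-1})$ and $\mathbb{E}(\varepsilon_1\varepsilon_2)=\mathbb{E}\varepsilon_1\,\mathbb{E}\varepsilon_2$ reproduce $b_1$ term by term.

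What differs is the order of operations. The paper expands only for the purely sequential pair $\boldsymbol{N'}$. There, Woodroofe's results supply $\mathbb{E}(N_i')$, the limiting variance and uniform integrability of $|Q_i'|^3$ directly. It then passes to $\boldsymbol N$ using only two facts: $\boldsymbol N=\boldsymbol{N'}$ off $E$, and the integrand is bounded. So the difference is controlled by $\mathbb{P}(E)=O(q^{-s/2})$, with no moment information on $N_i$ needed.

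You transfer the moments first and then expand in $\boldsymbol N$, and your justification for that transfer is insufficient. Agreement with $N_i'$ off an event of small probability carries over the central limit theorem, but not uniform integrability. So it does not give $\mathbb{E}(N_i-n_i^*)^2=n_i^*(1+o(1))$ or $\mathbb{E}|N_i-n_i^*|^3=O(q^{3/2})$ for the double sequential rule. You would need an a priori bound on $N_i$ over $\{N_i'<T_i\}$. For instance, $N_i\le\max\{T_i,\Lambda_i\}$ with $\Lambda_i=1+\sup\{n\ge m:\ n<U^{(i)}_n d_\alpha\}$, whose moments are $O(q^p)$ by exponential tail bounds for means of exponentials; combine this with H\"older's inequality and Lemma~\ref{lemma1}. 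The simpler repair is to adopt the paper's order: run your expansion for $(N_1',N_2')$ and finish with $|\mathbb{E}\,g(N_1,N_2)-\mathbb{E}\,g(N_1',N_2')|\le\mathbb{P}(E)$, which holds because $0<g<1$.

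Your treatment of the remainder is genuinely better than the paper's. The paper bounds the Lagrange-form third derivatives at the intermediate points $S_1,S_2$ even on $E_2',E_3',E_4'$. On those events the intermediate points can be near zero, so the bounds involve high inverse powers of $S_i$; that is where the paper invokes $m\ge 8$. By using $0<g<1$ on $G^c=\{N_1<\tfrac12 n_1^*\}\cup\{N_2<\tfrac12 n_2^*\}$ you avoid this entirely. You then need only $\mathbb{P}(G^c)=o(q^{-1})$, which the lower-tail bound $\mathbb{P}(N_i'\le\epsilon n_i^*)=O(n_i^{*-(m-1)})$ supplies for much smaller $m$. So your argument does not actually need $m\ge 8$ at this step, despite your closing remark.

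Two small completions:
\begin{itemize}
\item On $G^c$ you must also bound the quadratic Taylor polynomial, not just $g$. This is immediate: $|\varepsilon_i|\le 1$ on $\{N_i<\tfrac12 n_i^*\}$, and by independence $\mathbb{E}[\varepsilon_2^2 I(N_1<\tfrac12 n_1^*)]=\mathbb{E}(\varepsilon_2^2)\,\mathbb{P}(N_1<\tfrac12 n_1^*)$, and similarly with the indices swapped.
\item Since $N_i\ge N_i'$ always, $\mathbb{P}(N_i<\tfrac12 n_i^*)\le\mathbb{P}(N_i'<\tfrac12 n_i^*)$ directly, with no appeal to Lemma~\ref{lemma1}.
\end{itemize}
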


\begin{theorem}
\label{thm3}
   Under the same assumptions as in Theorem~\ref{thm1}, the risk efficiency and regret of the proposed double-sequential sampling procedure are as follows:
\begin{enumerate}[(i)]    
    \item Risk Efficiency: $\frac{R_{\boldsymbol{N}}(\boldsymbol{\mu},\boldsymbol{c})}{R_{\boldsymbol{n^*}}(\boldsymbol{\mu},\boldsymbol{c})} \rightarrow 1$,
    \item Regret: $R_{\boldsymbol{N}}(\boldsymbol{\mu},\boldsymbol{c})-R_{\boldsymbol{n^*}}(\boldsymbol{\mu},\boldsymbol{c}) =b_2-\frac{1}{4}\left(\frac{c_1}{\sigma_1}+\frac{c_2}{\sigma_2}\right) + o(q^{-1})$.
\end{enumerate}
where $m\geq 4$, $b_2\approx \frac{A\exp(-l_2)}{4 d_\alpha}\left[-\frac{3}{4\sigma_1}+\frac{3}{4\sigma_2}+\frac{l_2^2}{\sigma_2}+\frac{3l_2}{2\sigma_2}-\frac{l_2}{16\sigma_1\sigma_2d_\alpha}\right]$ and $l_2=d_\alpha(\Delta_1-b)= \log \left\{\frac{A(\Delta_1-\Delta_0)}{2(c_1+c_2)}\right\}$. 
\end{theorem}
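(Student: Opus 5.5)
The plan is to decompose the risk of the stopped procedure into its type II error part and its cost part, and expand each to second order. By \eqref{risk_function} evaluated at the random sample sizes $\boldsymbol{N}$,
\[
R_{\boldsymbol{N}}(\boldsymbol{\mu},\boldsymbol{c}) = A\,\mathbb{E}_{\boldsymbol{\theta}}\!\left[\left(\frac{\sigma_1}{N_1}+\frac{\sigma_2}{N_2}\right)^{-1}\frac{\sigma_2}{N_2}\exp\left\{-\frac{N_2(\Delta_1-b)}{\sigma_2}\right\}\right] + \frac{c_1}{\sigma_1}\mathbb{E}_{\boldsymbol{\theta}}(N_1)+\frac{c_2}{\sigma_2}\mathbb{E}_{\boldsymbol{\theta}}(N_2).
\]

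\textbf{Cost term.} Apply Theorem \ref{thm1}(ii), which gives $\mathbb{E}(N_i)=n_i^*-\tfrac14+o(1)$. Since $n_i^*=\sigma_i d_\alpha$, the cost part becomes
\[
(c_1+c_2)d_\alpha-\frac14\left(\frac{c_1}{\sigma_1}+\frac{c_2}{\sigma_2}\right)+o(c_1+c_2),
\]
and $o(c_1+c_2)$ is certainly $o(q^{-1})$.

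\textbf{Error term.} Follow the route used for Theorem \ref{thm2}, with the roles of the two populations interchanged. Write $N_i=n_i^*(1+\varepsilon_i)$ and define
\[
g(x_1,x_2)=\left(\frac{\sigma_1}{x_1}+\frac{\sigma_2}{x_2}\right)^{-1}\frac{\sigma_2}{x_2}\,e^{-x_2 l_2/(\sigma_2 d_\alpha)},
\]
where $l_2=d_\alpha(\Delta_1-b)$. Then Taylor-expand $g(N_1,N_2)$ around $\boldsymbol{n}^*$ to second order. At $\boldsymbol{n}^*$ we have $g=\tfrac12 e^{-l_2}$, and $A\cdot\tfrac12 e^{-l_2}$ equals the type II component of $R_{\boldsymbol{n}^*}$.

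The first-order terms use $\mathbb{E}(N_i-n_i^*)=-\tfrac14+o(1)$. The second-order terms use the standard result $\mathrm{Var}(N_i)/n_i^*\to 1$ (more precisely $n_i^{*-1/2}(N_i-n_i^*)\Rightarrow N(0,1)$, since $U^{(i)}_n$ is a mean of $n-1$ i.i.d.\ exponentials with CV 1), together with independence of $N_1$ and $N_2$, so the cross terms vanish in expectation. Collecting these pieces yields $b_2$: the $l_2^2/\sigma_2$ term comes from the variance of the exponent, the $3l_2/(2\sigma_2)$ and $\pm 3/(4\sigma_i)$ terms come from the mean shift and the curvature of the ratio prefactor, and the $(\sigma_1\sigma_2 d_\alpha)^{-1}$ term is a mixed correction. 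Everything is of order $e^{-l_2}/d_\alpha$. Since $Ae^{-l_2}\asymp c_1+c_2$, each such term has order $(c_1+c_2)/d_\alpha$, and the remainder is $o(q^{-1})$ once uniform integrability holds.

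\textbf{Conclusions.} Combining the two terms, the leading parts reproduce exactly $R_{\boldsymbol{n}^*}$, which gives part (ii). Part (i) then follows immediately, because $R_{\boldsymbol{n}^*}\asymp (c_1+c_2)q$ while the regret is $O(c_1+c_2)$.

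\textbf{Main obstacle.} The hard step is justifying the Taylor remainder and the uniform integrability of $(N_i-n_i^*)^2/n_i^*$ in the presence of the exponential factor $e^{-N_2 l_2/n_2^*}$. The plan here is to split on the event $\{N_i\ge n_i^*/2\}$ versus its complement. On the good event the third derivative of $g$ is bounded by $O(d_\alpha^{-3})$ times polynomial moments. On the complement, $g\le 1$ trivially, and the probability of the complement is bounded by $O(n_i^{*-s/2})$ using Lemma \ref{lemma1} (reducing $N_i$ to $N_i'$) together with the usual large-deviation bound for $\mathbb{P}(N_i'\le \tfrac12 n_i^*)$, which requires $m$ large enough (the condition $m\ge4$). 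It remains to check that $A$ times this probability is $o(q^{-1})$. This is delicate because $A\asymp (c_1+c_2)e^{l_2}$ grows. If polynomial rates are insufficient, the fallback is to use an exponential tail bound for $U^{(i)}_n$ over the range $n\le n^*/2$.
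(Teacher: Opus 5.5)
Your route is essentially the paper's: a second-order Taylor expansion of the type II term about $\boldsymbol{n^*}$, using $\mathbb{E}_{\boldsymbol{\theta}}(N_i)=n_i^*-\frac14+o(1)$ and the asymptotic normality of $(N_i-n_i^*)/\sqrt{n_i^*}$. The remainder is split according to whether $N_i$ exceeds a fixed fraction of $n_i^*$, and Lemma~\ref{lemma1} passes between $N_i$ and $N_i'$. The paper expands for $\boldsymbol{N'}$ and then bounds $|R_{\boldsymbol{N}}-R_{\boldsymbol{N'}}|$ on $E$; expanding at $\boldsymbol{N}$ directly makes no difference.

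There is one slip. The cross term does not vanish: by independence $\mathbb{E}_{\boldsymbol{\theta}}[(N_1-n_1^*)(N_2-n_2^*)]\to a_2^2\approx\frac{1}{16}$, and this product is exactly where the $-\frac{l_2}{16\sigma_1\sigma_2 d_\alpha}$ term in $b_2$ comes from.

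The genuine gap is part (i). You assert that the regret is $O(c_1+c_2)$ while $R_{\boldsymbol{n^*}}\asymp(c_1+c_2)q$. This fails for two reasons.
(a) Part (ii) determines the regret only up to $o(q^{-1})$, and $q^{-1}$ is enormous compared with $R_{\boldsymbol{n^*}}\asymp qe^{-q}$.
(b) $b_2$ itself is not $O(c_1+c_2)$, because you dropped the factors $l_2$ and $l_2^2$ in the bracket. With $Ae^{-l_2}=2(c_1+c_2)/(\Delta_1-\Delta_0)$, $l_2\sim q$ and $d_\alpha\sim q/(\Delta_1-\Delta_0)$, the $l_2^2/\sigma_2$ term alone gives $b_2/R_{\boldsymbol{n^*}}\to(\Delta_1-\Delta_0)/(2\sigma_2)\neq 0$.
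Note also that $A$ is fixed; $A\asymp(c_1+c_2)e^{l_2}$ is an identity, not growth.

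This cannot be patched, because expanding $e^{-l_2N_2/n_2^*}$ about $n_2^*$ does not give an asymptotic expansion in this regime. The quantity $l_2(N_2/n_2^*-1)$ has spread $l_2/\sqrt{n_2^*}\asymp\sqrt{q}\to\infty$. So the ``second-order'' term is about $q$ times the leading type II term, and the expectation is really governed by the lower tail of $N_2$, which is polynomial, not exponential.

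Concretely, $\{N_2=m\}=\{U^{(2)}_m\le m/d_\alpha\}$ (this forces $T_2=m$ since $\rho_2<1$). Since $(m-1)U^{(2)}_m/\sigma_2$ is Gamma$(m-1,1)$, we get $\mathbb{P}_{\boldsymbol{\theta}}(N_2=m)\asymp d_\alpha^{-(m-1)}\asymp q^{-(m-1)}$. On this event the type II probability is at least $\frac{\sigma_2}{\sigma_1+\sigma_2}e^{-m(\Delta_1-\Delta_0)/\sigma_2}$, because $N_1\ge m$ and $\Delta_1-b\le\Delta_1-\Delta_0$. Hence $R_{\boldsymbol{N}}\ge c\,q^{-(m-1)}$ while $R_{\boldsymbol{n^*}}\asymp qe^{-q}$. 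So $R_{\boldsymbol{N}}/R_{\boldsymbol{n^*}}\to\infty$ for fixed $A$ as $c_1+c_2\to0$, and no exponential tail bound of the kind in your fallback exists. The paper's own proof makes the same unjustified step (``part (i) is a direct consequence'').

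Part (ii) at the stated precision is attainable along your lines. On the good event the Taylor remainder carries a factor $e^{-l_2/2}$, and with $A$ fixed the bad events contribute $A\cdot O(q^{-s/2})=o(q^{-1})$. But the result is vacuous: $b_2$ and $\frac14(c_1/\sigma_1+c_2/\sigma_2)$ are both $O(qe^{-q})=o(q^{-1})$, and the regret is in fact of exact order $q^{-(m-1)}$.
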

\begin{remark}
Note that the constants $b_1$ and $b_2$ in Theorems~\ref{thm2} and~\ref{thm3}, respectively, tend to zero as $q \to \infty$.
\end{remark}
\begin{remark}
The \textit{regret} defined in part (ii) of Theorem~\ref{thm1} highlights the difference in risk functions of the proposed double sequential rule and the optimal fixed sample size procedure. Under the proposed testing framework, our aim is to achieve smaller or negative regret values. 
\end{remark}
\section{Simulation Study}\label{sec5}
We proceed to validate the theoretical results by means of a comprehensive simulation study. All simulations are conducted using the \textit{R} software environment. The simulation framework is developed separately under the null and alternative hypotheses. Initially, pseudorandom samples are generated from populations $X \sim f(x; \mu_1, \sigma_1)$ and $Y \sim f(y; \mu_2, \sigma_2)$, where $f(\cdot)$ denotes the PDF of the two-parameter exponential distribution, as specified in \eqref{popn_pdf}. The following parameter configurations are considered:
\begin{enumerate}[(i)]
    \item $\mu_1 = 25$, $\mu_2 = 20$, $\sigma_1 = 9$, $\sigma_2 = 12$,
    \item $(\rho_1, \rho_2) \in \{(0.3, 0.4), (0.5, 0.7)\}$ and $k_1 = k_2 \in \{1, 4, 6\}$, with $k = 1$ representing the purely sequential benchmark,
    \item $A=10^3,\ \alpha = 0.05$ and $m = 13$, satisfying $m - 1 \equiv 0 \pmod{k_i}$ for $i = 1,2$, and the conditions of Theorems \ref{thm1}-\ref{thm3},
    \item Under $H_0$, $(\Delta_0, \Delta_1) = (5, 6)$ and $(n_1^*,n_2^*) \in \left\{(45, 60),\ (150, 200),\ (240, 320),\ (450, 600)\right\}$,
    \item Under $H_1$, $(\Delta_0, \Delta_1) = (4, 5)$ and $(n_1^*,n_2^*) \in \left\{(45, 60),\ (120,160),\ (270,360),\ (420,560)\right\}$.
\end{enumerate}
Subsequently, an initial sample of size $m$ is generated from populations $f(x; 25, 9)$ and $f(y; 20, 12)$, respectively. At each stage, additional $k_1$ and $k_2$ observations are drawn independently from $X$ and $Y$, respectively in a sequential manner until the stopping criterion \eqref{Stage_1} is first satisfied. Thereafter, observations are collected one at a time until the stopping rule \eqref{Stage_2} is met. This yields the terminal sample $\{X_1, \ldots, X_{N_1};\, Y_1, \ldots, Y_{N_2}\}$. The procedure is replicated $10{,}000$ times to evaluate the performance of \eqref{Stage_1}--\eqref{Stage_2} and the decision rule \eqref{decision_rule} in terms of first- and second-order asymptotic properties.
\bigskip\\
\indent Although the optimal fixed sample sizes $n_1^{*}$ and $n_2^{*}$ are unknown, they are computed using \eqref{optimal_size_1}-\eqref{optimal_size_2} for comparison with their corresponding estimates $N_1$ and $N_2$, respectively. These values are used to assess the performance of first- and second-order approximations across small, moderate, and large sample sizes. The simulation results under the null and alternative hypotheses are presented in Tables \ref{Sim_Under_H0}-\ref{Sim_Under_H1}, with the reported quantities described in Table \ref{tables_col_details}. Based on these results, several key observations are summarized below.
\begin{enumerate}[(i)]
    \item The results for the proposed double sequential sampling scheme ($k_i \geq 2,\ 0<\rho_i<1$) are comparable to those obtained under the purely sequential procedure ($k_i = 1$).
    \item The estimates $\bar{N}_1$ and $\bar{N}_2$ (Column 5) closely approximate $n_1^*$ and $n_2^*$, respectively, with small standard errors (SEs) across all choices of $k_i$ and $\rho_i$.
    \item The ratios $\bar{N}_i/n_i^*$ (Column 6) approach unity as the optimal sample sizes increase, demonstrating first-order efficiency.
    \item The differences $(\bar{N}_i - n_i^*)$ are consistently negative, indicating slight underestimation. However, the deviations are negligible relative to $n_i^*$, even for large values, in accordance with Theorem \ref{thm1}(ii), confirming second-order efficiency.
    \item Under $H_0$, the estimated type I error probabilities (Column 8, Table~\ref{Sim_Under_H0}) remain close to the nominal level $\alpha = 0.05$, with small SEs; minor inflation is observed for smaller sample sizes.
    \item The risk ratios $\bar{\omega}$ (Column 8, Table~\ref{Sim_Under_H1}) are close to unity with negligible SEs, indicating strong first-order risk efficiency.
    \item The average regret $\bar{\kappa}$ (Column 9, Table~\ref{Sim_Under_H1}) is negligible, except for small sample sizes, consistent with Theorem~\ref{thm3}(ii), confirming second-order risk efficiency.
    \item The number of sampling operations required (Column 9, Table~\ref{Sim_Under_H0}; Column 10, Table~\ref{Sim_Under_H1}) highlights the advantage of the proposed double sequential procedure over the purely sequential approach without loss of efficiency.
\end{enumerate}

\subsection{Power Comparisons and Sensitivity Analysis}
This section presents a brief analysis on the power and sensitivity of the proposed double sequential sampling procedure \eqref{Stage_1}-\eqref{Stage_2}.
\begin{enumerate}
    \item \textbf{Power analysis}
\begin{itemize}
        \item \textit{Parameter configuration:} $\mu_1 = 25$, $\mu_2 \in [18.04,\,22.00]$, $\sigma_1 = 9$, $\sigma_2 = 12$, $\Delta_0 = 5$, $A = 10^3$, $k_1 = k_2 = 4$, $\rho_1 = 0.5$, $\rho_2 = 0.7$, $m = 13$, $c_1 + c_2 = 10^{-5}$, $\alpha = 0.05$.      
        \item \textit{Observations:} Power curves for $\Delta_1 = 6$ and $7$ (Figures~\ref{power_curve_6} and \ref{power_curve_7}) exhibit high power when $\mu_1 - \mu_2$ is close to $\Delta_1$, and decrease monotonically to zero as $\mu_2$ increases.
\end{itemize}
\item \textbf{Sensitivity with respect to $(c_1+c_2)$}
\begin{itemize}
    \item \textit{Parameter configuration:} $\mu_1 = 25$, $\mu_2 = 20$, $\Delta_0 = 5$, $\Delta_1 = 6$, and $(c_1 + c_2) \in [10^{-19.75},\,10^{2.5}]$, with all remaining parameters held fixed.
    \item \textit{Observations:} As shown in Figure~\ref{sensitivity_wrt_cost}, $(N_{1}, N_{2})$ decrease approximately linearly (on a logarithmic scale) as $(c_1+c_2)$ increases. The type I error $\alpha$ remains stable near its nominal level of $0.05$. The statistical power $(1-\beta)$ is close to unity for small $(c_1+c_2)$ (corresponding to larger optimal sample sizes), while it exhibits a slight decrease for larger $(c_1+c_2)$ values. Moreover, substantial differences in total sampling operations across different combinations of $k_i$ and $\rho_i$ are observed for small $(c_1+c_2)$, whereas these differences converge as $(c_1+c_2)$ becomes large.
\end{itemize}
\item \textbf{Sensitivity with respect to $m$}
\begin{itemize}
    \item \textit{Parameter configuration:} $(\Delta_0,\, \Delta_1) = (5,\,6)$ for $(N_{1}, N_{2})$ and $\alpha$, and $(3,\,5)$ for regret; all other parameters remain as in the previous setting. The pilot sample size varies over $m \in [9,\,59]$.  
    \item \textit{Observations:} Figure~\ref{sensitivity_wrt_m} indicates that $(N_{1}, N_{2})$, $\alpha$, and regret exhibit negligible sensitivity to $m$.
\end{itemize}
\end{enumerate}

\section{Illustration Using Precipitation Data}\label{sec6}
After preprocessing and declustering, the resulting datasets defined in Section \ref{sec2} contain \(146\) inter-arrival times for Seattle-Tacoma international airport and \(110\) inter-arrival times for Portland international airport. As a preliminary step, the suitability of the shifted exponential distribution for modeling the inter-arrival times of heavy precipitation episodes is examined using histograms, quantile-quantile (Q-Q) plots, and empirical cumulative distribution function (ECDF) plots for both stations, as presented in Figures \ref{Histogram}-\ref{ECDF}. The histograms display a clear right-skewed pattern with exponentially decaying densities, consistent with the proposed model. In both cases, the fitted shifted exponential curves provide a reasonable approximation to the observed distributions, although a few large waiting times are visible in the upper tails. The Q-Q plots show strong agreement between the empirical and theoretical quantiles, with only minor deviations appearing in the extreme tail regions, particularly for the Portland dataset. Likewise, the ECDFs closely follow the corresponding fitted theoretical distribution functions over most of the support. Only mild discrepancies are observed in the tail regions, while the central portions of the distributions show strong agreement. These graphical findings are further supported by the two-sided Kolmogorov-Smirnov (KS) test, which yields $p$-values of $0.6359$ and $0.2631$ for the Seattle-Tacoma international airport and Portland international airport datasets, respectively. Since both $p$-values exceed the conventional significance level of $0.05$, the shifted exponential distribution appears to provide an adequate model for the inter-arrival times of heavy precipitation episodes at both locations.
\bigskip\\
\indent Let \(X\) and \(Y\) denote the inter-arrival time distributions corresponding to the Seattle--Tacoma international airport and Portland international airport datasets, respectively. To implement the proposed double sequential procedure, the design parameters are chosen as $\Delta_0 = 0,\, \alpha = 0.05,\,  A = 10,$ with $k_1 = k_2 = 3, \, \rho_1 = \rho_2 = 0.6$. The null value \(\Delta_0\) is examined against two alternatives, namely \(\Delta_1 = 4\) and \(\Delta_1 = 6\). For each choice of \(\Delta_1\), two values of the total cost coefficient, $(c_1+c_2)=10^{-2}\ \text{and} \ 10^{-3}$, are considered. The pilot sample size is fixed at \(m=8\). Treating processed rainfall datasets as populations, the corresponding ML estimates of the model parameters are $\hat{\theta}_1=\hat{\theta}_2=4$ and $\hat{\sigma}_1=18.669,\, \hat{\sigma}_2=22.486$. 
\bigskip\\
\indent Following the sampling schemes described in \eqref{Stage_1}--\eqref{Stage_2}, observations are selected using simple random sampling without replacement (SRSWOR), and the final decision is determined according to the rule given in \eqref{decision_rule_2}. To assess the sampling variability and stability of the estimator $(X_{N_1(1)} - Y_{N_2(1)})$, a nonparametric bootstrap procedure is additionally employed. For each case, bootstrap resampling with replacement is performed from terminal samples of sizes \(N_1\) and \(N_2\), respectively. Based on \(B'=1000\) bootstrap replications, bootstrap statistics $\Psi^{*(1)}_{(N_1,N_2)},\ldots,\Psi^{*(B')}_{(N_1,N_2)}$ are obtained, where
\[
\Psi^{*(j)}_{(N_1,N_2)} = \left(X^{*(j)}_{N_1(1)}-Y^{*(j)}_{N_2(1)}\right),\quad j=1,\ldots,B',
\]
denotes the bootstrap replicate of $X_{N_1(1)} - Y_{N_2(1)}$.
The corresponding bootstrap mean and bootstrap SE are denoted by $\overline{\Psi}^{*}_{(N_1,N_2)}$ and $s_{\Psi^{*}_{(N_1,N_2)}}$ respectively.
\bigskip\\
\indent The results are presented in Table \ref{real_data_analysis}. It should be noted that the reported terminal sample sizes and the corresponding values of \(X_{N_1(1)}-Y_{N_2(1)}\) are based on a single realization of the proposed sequential sampling procedure. In all scenarios considered, the estimated sample sizes are reasonably close to the corresponding optimal fixed-sample sizes. Moreover, the observed statistic $(X_{N_1(1)}-Y_{N_2(1)})$ is smaller than the critical value \(b\), leading to a failure to reject the null hypothesis in each case. It may further be noted that the repeated zero values of \(X_{N_1(1)}-Y_{N_2(1)}\) arise from the discrete nature of the inter-arrival time data and the use of the minimum order statistic as the estimator of \(\mu\), due to which the terminal samples from both stations frequently attain the same minimum value. The bootstrap summaries indicate moderate variability in the estimator \(X_{N_1(1)}-Y_{N_2(1)}\), which is expected since the procedure is based on extreme order statistics. Nevertheless, the bootstrap mean estimates remain close to the observed statistic, suggesting reasonable stability of the estimated difference under repeated resampling. Consequently, the analysis does not provide sufficient statistical evidence to conclude that the minimum inter-arrival times of heavy precipitation episodes differ between the two stations.

\section{Concluding Thoughts}\label{sec7}
In classical hypothesis testing, fixing the type I error probability generally leads to a reduction in the type II error probability as the sample size increases. A common approach is therefore to prescribe both error probabilities in advance and determine the required sample size accordingly. However, such a framework may create an unrealistic impression that observations can be collected without cost or that the available sampling budget is effectively unlimited. To address this limitation, the present work adopts a cost-efficient testing framework in which the type I error probability is controlled at a preassigned level while the type II error probability is minimized, equivalently maximizing the power of the test, subject to a given sampling cost constraint. Motivated by this objective, the paper develops double sequential sampling procedures for testing the difference between the location parameters of two shifted exponential distributions with unknown and unequal scale parameters. Since the optimal fixed sample size expressions involve nuisance parameters, the proposed sequential procedure provides a practically implementable alternative. The proposed methodology is computationally efficient and, through suitable choices of the design parameters \((\rho_i, k_i)\), requires fewer sampling operations compared with existing other conventional sequential procedures. Theoretical investigations establish desirable asymptotic properties of the proposed procedures, while simulation studies and real-data analysis demonstrate their practical effectiveness and applicability. Future research may focus on extending the proposed framework to more general distributional settings, multiple-comparison problems, and multi-sample situations under suitable sequential sampling schemes.
\bigskip\\
\noindent\textbf{Conflict of Interest}
\bigskip\\
The authors declare no conflict of interest.
\bigskip\\
\noindent\textbf{Data Availability}
\bigskip\\
The dataset used for the illustrative analysis in this research is publicly accessible via the link provided in Section \ref{sec2}.
\bigskip\\
\noindent\textbf{Funding}
\bigskip\\
Ashwani Rajput greatly acknowledges the financial support received from the Indian Institute of Technology Delhi to pursue this research as part of his doctoral thesis.

\printbibliography

@article{hu2022double,
  title   = {A Double-Sequential Sampling Scheme},
  author  = {Hu, J.},
  journal = {Communications in Statistics - Theory and Methods},
  pages   = {6319--6333},
  volume  = {51},  
  year    = {2020}
}

@article{hall_1983,
  title={Sequential estimation saving sampling operations},
  author={Hall, P.},
  journal={Journal of the Royal Statistical Society Series B: Statistical Methodology},
pages={219--223},
  volume={45},
  year={1983}
}

@article{mukhopadhyay2020groupseq,
  title   = {Purely sequential {FWCI and MRPE} problems for the mean of a normal population by sampling in groups with illustrations using breast cancer data},
  author  = {Mukhopadhyay, N. and Wang, Z.},
  journal = {Sequential Analysis},
  volume  = {39},
  pages   = {176--213},
  year    = {2020}
}

@article{mukhopadhyay_zhuang_2019,
  title={Two-sample two-stage and purely sequential methodologies for tests of hypotheses with applications: comparing normal means when the two variances are unknown and unequal},
  author={Mukhopadhyay, N. and Zhuang, Y.},
  journal={Sequential Analysis},
  pages={69 -- 114},
  volume={38},
  year={2019}
}

@article{woodroofe_1977,
  title={Second-order approximations for sequential point and interval estimation},
  author={Woodroofe, M.},
  journal={Annals of Statistics},
  pages={984 -- 995},
  volume={5},
  year={1977}
}

@article{mukhopadhyay_aloufi_2024,
  title={Second-order (s.o.) multi-stage fixed-width confidence
interval ({FWCI}) estimation strategies for comparing
location parameters from two negative exponential ({NE})
populations: illustrations with cancer data},
  author={Mukhopadhyay, N. and Aloufi, A.},
  journal={Metrika},
  pages={649 -- 680},
  volume={87},
  year={2024}
}

@article{stein_1945,
    title = {A two sample test for a linear hypothesis whose power is independent of the variance},
    author = {Stein, C.},
    journal = {Annals of Mathematical Statistics},
pages={243 -- 258},
  volume={16},
    year = {1945}
}

@article{anscombe_1953,
    title = {Sequential estimation},
    author = {Anscombe, F.J.},
    journal = {Journal of the Royal Statistical Society Series B: Statistical Methodology},
pages={1 -- 29},
  volume={15},
    year = {1953}
}

@article{chow_robbins_1965,
    title = {On the asymptotic theory of fixed width sequential confidence intervals for the mean},
    author = {Chow, Y.S. and Robbins, H.},
    journal = {Annals of Mathematical Statistics},
pages={457 -- 462},
  volume={36},
    year = {1965}
}

@article{starr_1966,
    title = {On the asymptotic efficiency of a sequential procedure for estimating the mean},
    author = {Starr, N.},
    journal = {Annals of Mathematical Statistics},
pages={1173 -- 1185},
  volume={37},
    year = {1966}
}

@book{mukhopadhyay_2009,
    author = {Mukhopadhyay, N and de Silva, BM},
    title = {Sequential Methods and Their Applications},
    publisher = {Boca Raton, FL: CRC},
    year = {2009}
}

@article{RanganathanKale1979,
  author    = {Ranganathan, J. and Kale, B.K.},
  title     = {Tests of Hypotheses for Reliability Functions in Two-Parameter Exponential Models},
  journal   = {Canadian Journal of Statistics},
  volume    = {7},
  number    = {2},
  pages     = {177--184},
  year      = {1979}
}

@article{KumarPatel1971,
  author    = {Kumar, S. and Patel, H.I.},
  title     = {A Test for the Comparison of Two Exponential Distributions},
  journal   = {Technometrics},
  volume    = {13},
  pages     = {183--189},
  year      = {1971}
}

@article{KrishnamoorthyXia2017,
  author    = {Krishnamoorthy, K. and Xia, Y.},
  title     = {Confidence Intervals for a Two-Parameter Exponential Distribution: One and Two-Sample Problems},
  journal   = {Communications in Statistics -- Theory and Methods},
  volume    = {47},
  pages     = {935--952},
  year      = {2017}
}

@article{BayoudKittaneh2016,
  author    = {Bayoud, H.A. and Kittaneh, O.A.},
  title     = {Testing the Equality of Two Exponential Distributions},
  journal   = {Communications in Statistics -- Simulation and Computation},
  volume    = {45},
  pages     = {2249--2256},
  year      = {2016}
}

@article{MukhopadhyayHamdy1984,
  author  = {Mukhopadhyay, N. and Hamdy, H.I.},
  title   = {On Estimating the Difference of Location Parameters of Two Negative Exponential Distributions},
  journal = {Canadian Journal of Statistics},
  volume  = {12},
  number  = {1},
  pages   = {67--76},
  year    = {1984}
}

@article{IsogaiFutschik2010,
  author  = {Isogai, E. and Futschik, A.},
  title   = {Sequential Estimation of a Linear Function of Location Parameters of Two Negative Exponential Distributions},
  journal = {Journal of Statistical Planning and Inference},
  volume  = {140},
  number  = {9},
  pages   = {2416--2424},
  year    = {2010}
}

@article{IsogaiUno2018,
  author  = {Isogai, E. and Uno, C.},
  title   = {Three-Stage Confidence Intervals for a Linear Combination of Locations of Two Negative Exponential Distributions},
  journal = {Metrika},
  volume  = {81},
  number  = {1},
  pages   = {85--103},
  year    = {2018}
}

@article{MukhopadhyayBapat2016,
  author  = {Mukhopadhyay, N. and Bapat, S.R.},
  title   = {Multistage Estimation of the Difference of Locations of Two Negative Exponential Populations under a Modified Linex Loss Function: Real Data Illustrations from Cancer Studies and Reliability Analysis},
  journal = {Sequential Analysis},
  volume  = {35},
  number  = {3},
  pages   = {387--412},
  year    = {2016}
}

@article{MukhopadhyayDarmanto1988,
  author  = {Mukhopadhyay, N. and Darmanto, S.},
  title   = {Sequential Estimation of the Difference of Means of Two Negative Exponential Populations},
  journal = {Sequential Analysis},
  volume  = {7},
  number  = {2},
  pages   = {165--176},
  year    = {1988}
}

@article{MukhopadhyayPadmanabhan1993,
  author  = {Mukhopadhyay, N. and Padmanabhan, A.R.},
  title   = {A Note on Three-Stage Confidence Intervals for the Difference of Locations: The Exponential Case},
  journal = {Metrika},
  volume  = {40},
  number  = {1},
  pages   = {121--128},
  year    = {1993}
}

@article{zhuang_bapat_2022,
  title={On comparing locations of two-parameter exponential distributions using sequential sampling with applications in cancer research},
  author={Zhuang, Y. and Bapat, S.R.},
  journal={Communications in Statistics - Simulation and Computation},
  pages={6114 -- 6135},
  volume={51},
  year={2022}
}

@article{RajputJoshi2025,
  author  = {Rajput, A. and Joshi, N.},
  title   = {On Testing the Ratio of Scale Parameters of Two {P}areto Distributions Using a Novel Accelerated Sequential Sampling Technique with Applications},
  journal = {Sequential Analysis},
  volume  = {44},
  number  = {4},
  pages   = {464--490},
  year    = {2025}
}

\newpage
\renewcommand{\arraystretch}{0.45}

\begin{table}[H]
\centering
\caption{Set of Notations and Expressions $(i=1,2)$}
\label{tables_col_details}
\vspace{0.3cm}
\begin{tabular}{cl}\toprule
Notations & \multicolumn{1}{c}{Description}  \\ 
\midrule
$\rho_i$ &: Proportion, $\rho_i \in (0,1)$ defining the stopping boundary in the first stage. \\ \\
$k_i$ &: Number of observations per stage in the first phase. \\ \\
$n_i^*$ & : Optimal fixed sample size corresponding to the $i^{th}$ population.  \\ \\
$c_i$ & : Cost coefficient for $i^{th}$ population.\\ \\
$B$ & : Number of replications. \\ \\
$B'$ & : Number of bootstrap replications. \\ \\
$N_{ij}$ & : Sample size for population $i$ in the $j^{th}$ replication. \\ \\
$\displaystyle \bar{N}_i=\frac{1}{B}\sum_{j=1}^{B}n_{ij}$  & : Estimate of $n_i^*$. \\ \\ 
$\displaystyle s_{\bar{N}_i}=\sqrt{\frac{1}{B(B-1)}\sum_{j=1}^{B}(N_{ij}-\bar{N_i})^2}$  & : Standard error of $\bar{N}_i$. \\ \\
$\bar{N}_i/n_i^*$ & : Ratio of estimated to optimal sample size for population $i$. \\ \\
$\bar{N}_i - n_i^*$ & : Difference between estimated and optimal sample size for population $i$. \\ \\
$\hat{\alpha}_i$ & : $\hat{\alpha}_i= \begin{cases}
     1, & \text{if $H_0$ is rejected in $i^{th}$ run},\\
     0, & \text{if $H_0$ is accepted in $i^{th}$ run.}
 \end{cases}$ \\ \\
$\displaystyle\bar{\hat{\alpha}}= \frac{1}{B}\sum_{i=1}^{B}\hat{\alpha}_i$ & : Estimate of the type I error probability $\alpha$. \\ \\ 
$s_{\bar{\hat{\alpha}}}=\sqrt{\frac{\bar{\hat{\alpha}}(1-\bar{\hat{\alpha}})}{B}}$ & : SE of $\bar{\hat{\alpha}}$. \\ \\
$R_{\boldsymbol{n^*}}(\boldsymbol{\mu},\boldsymbol{c})$ & : Minimum fixed sample risk. \\ \\
$r_b$ & : Risk corresponding to the $b^{th}$ replication. \\ \\
$\bar{r}=\frac{\sum_{b=1}^{B}r_b}{B}$ & : Average risk efficiency. \\ \\
$\bar{\omega}=\frac{\bar{r}}{R_{\boldsymbol{n^*}}(\boldsymbol{\mu},\boldsymbol{c})}$ & : Average risk ratio. \\ \\
$s_{\bar{\omega}}$  & : SE of $\bar{\omega}$. \\ \\
$\bar{\kappa}=\bar{r}-R_{\boldsymbol{n^*}}(\boldsymbol{\mu},\boldsymbol{c}) $  & : Average regret.\\ \\
$s_{\bar{\kappa}}$  & : SE of $\bar{\kappa}$. \\ \\
$\xi_i$  & : Number of sampling operations for $i^{th}$ population. \\  \\
$\Psi^{*(j)}_{(N_1,N_2)}
=\left(X^{*(j)}_{N_1(1)}-Y^{*(j)}_{N_2(1)}
\right)$ & : Bootstrap replicate of
$\left(X_{N_1(1)} - Y_{N_2(1)}\right)$ from the $j^{{th}}$ bootstrap sample.\\ \\
$\displaystyle \overline{\Psi}^{*}_{(N_1,N_2)}=\frac{1}{B'}
\sum_{j=1}^{B'}\Psi^{*(j)}_{(N_1,N_2)}$ & : Bootstrap mean of the statistic $\left(X_{N_1(1)} - Y_{N_2(1)}\right)$.\\ \\
$\displaystyle s_{\Psi^{*}_{(N_1,N_2)}}=\sqrt{
\frac{1}{B'-1} \sum_{j=1}^{B'}
\left(\Psi^{*(j)}_{(N_1,N_2)}-\overline{\Psi}^{*}_{(N_1,N_2)}
\right)^2}$ & : Bootstrap SE of
$\left(X_{N_1(1)} - Y_{N_2(1)}\right)$.\\
\bottomrule
\end{tabular}
\end{table}

\renewcommand{\arraystretch}{0.90}

\begin{table}[H]
\centering
\caption{Simulation results under $H_0$ for the testing problem \eqref{hypothesis} using double sequential strategy \eqref{Stage_1}-\eqref{Stage_2} with $A=10^3, \alpha=0.05,\ \Delta_0=5,\ \Delta_1=6$}
\label{Sim_Under_H0}
\begin{tabular}{ccccccccccc}
\toprule
$(k_1, k_2)$ & $(\rho_1, \rho_2)$ & $(n_1^*, n_2^*)$ & $(c_1+c_2)$ &
$\bar{N}_1\ (s_{\bar{N}_1})$ & $\bar{N}_1/n_1^*$ & $\bar{N}_1-n_1^*$ & $\bar{\hat{\alpha}}$ & $\xi_1$ \\
 &  &  &  & $\bar{N}_2\ (s_{\bar{N}_2})$ & $\bar{N}_2/n_2^*$ & $\bar{N}_2-n_2^*$ & $ (s_{\bar{\hat{\alpha}}})$ & $\xi_2$ \\
\midrule 
(1,1) &  & (45,60) & 33.68973 & 44.56 (0.0711) & 0.9903 & -0.4352 & 0.0566 & 32.56 \\ 
 &  &  &  & 59.58 (0.0808) & 0.9930 & -0.4214 & (0.0023) & 47.58 \\
 &  & (150,200) & 0.00029 & 149.56 (0.1249) & 0.9970 & -0.4431 & 0.0521 & 137.56 \\
 &  &  &  & 199.81 (0.1435) & 0.9990 & -0.1918 & (0.0022) & 187.81 \\
 &  & (240,320) & $1.31 \times 10^{-8}$  & 239.55 (0.1568) & 0.9981 & -0.4536 & 0.0516 & 227.55 \\
 &  &  &  & 319.71 (0.1814) & 0.9991 & -0.2891 & (0.0022) & 307.71 \\
 &  & (450,600) & $9.64 \times 10^{-19}$ & 450.02 (0.2127) & 1.0000 & 0.0200 & 0.0499 & 438.02 \\
 &  &  &  & 599.48 (0.2455) & 0.9991 & -0.5165 & (0.0022) & 587.48 \\
\hline
(4,4) & (0.3,0.4) & (45,60) & 33.68973 & 44.58 (0.0714) & 0.9906 & -0.4250 & 0.0562 & 30.61 \\ 
 &  &  &  & 59.64 (0.0813) & 0.9940 & -0.3607 & (0.0023) & 38.45 \\
 &  & (150,200) & 0.00029 & 149.76 (0.1237) & 0.9984 & -0.2438 & 0.0507 & 112.77 \\
 &  &  &  & 199.49 (0.1434) & 0.9974 & -0.5143 & (0.0022) & 136.27 \\
 &  & (240,320) & $1.31 \times 10^{-8}$  & 239.88 (0.1585) & 0.9995 & -0.1173 & 0.0524 & 182.65 \\
 &  &  &  & 320.12 (0.1780) & 1.0004 & 0.1230 & (0.0022) & 220.79 \\
 &  & (450,600) & $9.64 \times 10^{-19}$ & 449.76 (0.2135) & 0.9995 & -0.2445 & 0.0491 & 345.15 \\
 &  &  &  & 599.42 (0.2468) & 0.9990 & -0.5789 & (0.0022) & 416.14 \\
\cline{2-9}
 & (0.5,0.7) & (45,60) & 33.68973 & 44.51 (0.0709) & 0.9890 & -0.4929 & 0.0592 & 24.44 \\ 
 &  &  &  & 59.52 (0.0807) & 0.9920 & -0.4804 & (0.0024) & 24.90 \\
 &  & (150,200) & 0.00029 & 149.69 (0.1235) & 0.9979 & -0.3143 & 0.0540 & 90.10 \\
 &  &  &  & 199.79 (0.1432) & 0.9989 & -0.2139 & (0.0023) & 91.54 \\
 &  & (240,320) & $1.31 \times 10^{-8}$  & 239.80 (0.1570) & 0.9992 & -0.1981 & 0.0518 & 146.45 \\
 &  &  &  & 320.06 (0.1794) & 1.0002 & 0.0609 & (0.0022) & 148.67 \\
 &  & (450,600) & $9.64 \times 10^{-19}$ & 449.71 (0.2126) & 0.9994 & -0.2853 & 0.0473 & 277.70 \\
 &  &  &  & 599.57 (0.2471) & 0.9993 & -0.4329 & (0.0021) & 281.36 \\
\hline
(6,6) & (0.3,0.4) & (45,60) & 33.68973 & 44.62 (0.0708) & 0.9916 & -0.3787 & 0.0542 & 29.78 \\ 
 &  &  &  & 59.76 (0.0803) & 0.9960 & -0.2413 & (0.0023) & 36.61 \\
 &  & (150,200) & 0.00029 & 149.69 (0.1235) & 0.9980 & -0.3052 & 0.0487 & 109.06 \\
 &  &  &  & 199.48 (0.1430) & 0.9974 & -0.5155 & (0.0022) & 129.68 \\
 &  & (240,320) & $1.31 \times 10^{-8}$  & 239.70 (0.1558) & 0.9988 & -0.2964 & 0.0525 & 176.56 \\
 &  &  &  & 319.69 (0.1777) & 0.9990 & -0.3084 & (0.0022) & 209.81 \\
 &  & (450,600) & $9.64 \times 10^{-19}$ & 449.50 (0.2138) & 0.9989 & -0.4979 & 0.0532 & 333.93 \\
 &  &  &  & 600.11 (0.2466) & 1.0002 & 0.1142 & (0.0022) & 396.75 \\
\cline{2-9}
 & (0.5,0.7) & (45,60) & 33.68973 & 44.51 (0.0718) & 0.9892 & -0.4879 & 0.0581 & 22.66 \\ 
 &  &  &  & 59.68 (0.0805) & 0.9947 & -0.3152 & (0.0023) & 21.51 \\
 &  & (150,200) & 0.00029 & 149.72 (0.1249) & 0.9981 & -0.2789 & 0.0529 & 84.13 \\
 &  &  &  & 199.62 (0.1429) & 0.9981 & -0.3838 & (0.0022) & 79.75 \\
 &  & (240,320) & $1.31 \times 10^{-8}$  & 239.66 (0.1570) & 0.9986 & -0.3373 & 0.0483 & 136.51 \\
 &  &  &  & 319.59 (0.1794) & 0.9987 & -0.4134 & (0.0021) & 129.86 \\
 &  & (450,600) & $9.64 \times 10^{-19}$ & 449.74 (0.2109) & 0.9994 & -0.2557 & 0.0526 & 259.19 \\
 &  &  &  & 599.46 (0.2453) & 0.9991 & -0.5395 & (0.0022) & 246.34 \\
\bottomrule
\end{tabular}
\end{table}

\begin{table}[H]
\centering
\setlength{\tabcolsep}{4pt}
\caption{Simulation results under $H_1$ for the testing problem \eqref{hypothesis} using double sequential strategy \eqref{Stage_1}-\eqref{Stage_2} with $A=10^3, \alpha=0.05,\ \Delta_0=4,\ \Delta_1=5$}
\label{Sim_Under_H1}
\begin{tabular}{ccccccccccc}\toprule
$(k_1, k_2)$ & $(\rho_1, \rho_2)$ & $(n_1^*, n_2^*)$ & $(c_1+c_2)$ &
$\bar{N}_1\ (s_{\bar{N}_1})$ & $\bar{N}_1/n_1^*$ & $\bar{N}_1-n_1^*$ & $\bar{\omega}$ & $\bar{\kappa}$ & $\xi_1$ \\
 &  &  &  & $\bar{N}_2\ (s_{\bar{N}_2})$ & $\bar{N}_2/n_2^*$ & $\bar{N}_2-n_2^*$ & $ (s_{\bar{\omega}})$ & $(s_{\bar{\kappa}})$ & $\xi_2$ \\
\midrule
(1,1) &  & (45,60) & 33.6897 & 44.72 (0.0712) & 0.9937 & -0.2848 & 1.0073 & 1.4676 & 32.72 \\
 &  &  &  & 59.71 (0.0801) & 0.9951 & -0.2918 & (0.0092) & (1.8648) & 47.71 \\
 &  & (120,160) & 0.0081 & 119.90 (0.1117) & 0.9991 & -0.1036 & 0.9290 & -0.0082 & 107.90 \\
 &  &  &  & 159.72 (0.1270) & 0.9983 & -0.2785 & (0.0006) & $(6.63 \times 10^{-5})$ & 147.72 \\
 &  & (270,360) & $4.68 \times 10^{-10}$  & 269.71 (0.1642) & 0.9989 & -0.2881 & 0.9669 & $-4.80 \times 10^{-10}$ & 257.71 \\
 &  &  &  & 359.78 (0.1921) & 0.9994 & -0.2233 & (0.0004) & $(5.67 \times 10^{-12})$ & 347.78 \\
 &  & (420,560) & $2.70 \times 10^{-17}$ & 419.59 (0.2056) & 0.9990 & -0.4135 & 0.9784 & $-2.78 \times 10^{-17}$ & 407.59 \\
 &  &  &  & 559.88 (0.2361) & 0.9998 & -0.1236 & $(0.0003)$ & $(4.08 \times 10^{-19})$ & 547.88 \\
\hline
(4,4) & (0.3,0.4) & (45,60) & 33.6897 & 44.60 (0.0710) & 0.9911 & -0.4022 & 1.0176 & 3.5615 & 30.59 \\
 &  &  &  & 59.65 (0.0814) & 0.9941 & -0.3533 & (0.0095) & (1.9228) & 38.42 \\
 &  & (120,160) & 0.0081 & 119.74 (0.1119) & 0.9978 & -0.2608 & 0.9281 & -0.0083 & 89.52 \\
 &  &  &  & 159.63 (0.1299) & 0.9977 & -0.3740 & (0.0006) & $(6.69 \times 10^{-5})$ & 108.42 \\
 &  & (270,360) & $4.68 \times 10^{-10}$  & 269.65 (0.1655) & 0.9987 & -0.3526 & 0.9668 & $-4.81 \times 10^{-10}$ & 205.67 \\
 &  &  &  & 359.80 (0.1883) & 0.9995 & -0.1955 & (0.0004) & $(5.64 \times 10^{-12})$ & 248.46 \\
 &  & (420,560) & $2.70 \times 10^{-17}$ & 419.33 (0.2053) & 0.9984 & -0.6669 & 0.9785 & $-2.77 \times 10^{-17}$ & 321.65 \\
 &  &  &  & 560.29 (0.2364) & 1.0005 & 0.2896 & $(0.0003)$ & $(4.06 \times 10^{-19})$ & 388.76 \\
\cline{2-10}
 & (0.5,0.7) & (45,60) & 33.6897 & 44.57 (0.0707) & 0.9905 & -0.4284 & 0.9977 & -0.4674 & 24.49 \\
 &  &  &  & 59.59 (0.0807) & 0.9932 & -0.4102 & (0.0090) & (1.8285) & 24.86 \\
 &  & (120,160) & 0.0081 & 119.79 (0.1111) & 0.9982 & -0.2113 & 1.7904 & 0.0917 & 71.35 \\
 &  &  &  & 159.81 (0.1276) & 0.9988 & -0.1897 & (0.8615) & (0.0999) & 72.44 \\
 &  & (270,360) & $4.68 \times 10^{-10}$  & 269.54 (0.1651) & 0.9983 & -0.4620 & 0.9669 & $-4.80 \times 10^{-10}$ & 165.11 \\
 &  &  &  & 359.97 (0.1915) & 0.9999 & -0.0269 & (0.0004) & $(5.64 \times 10^{-12})$ & 167.48 \\
 &  & (420,560) & $2.70 \times 10^{-17}$ & 419.87 (0.2070) & 0.9997 & -0.1349 & 0.9789 & $-2.72 \times 10^{-17}$ & 259.08 \\
 &  &  &  & 560.07 (0.2400) & 1.0001 & 0.0741 & $(0.0003)$ & $(4.11 \times 10^{-19})$ & 262.39 \\
\hline
(6,6) & (0.3,0.4) & (45,60) & 33.6897 & 44.61 (0.0708) & 0.9912 & -0.3944 & 1.0383 & 7.7336 & 29.79 \\
 &  &  &  & 59.62 (0.0796) & 0.9936 & -0.3836 & (0.0100) & (2.0210) & 36.49 \\
 &  & (120,160) & 0.0081 & 119.56 (0.1110) & 0.9963 & -0.4420 & 0.9278 & -0.0084 & 86.42 \\
 &  &  &  & 159.75 (0.1286) & 0.9984 & -0.2490 & (0.0006) & $(6.67 \times 10^{-5})$ & 103.43 \\
 &  & (270,360) & $4.68 \times 10^{-10}$  & 269.91 (0.1643) & 0.9997 & -0.0854 & 0.9673 & $-4.75 \times 10^{-10}$ & 199.10 \\
 &  &  &  & 359.76 (0.1926) & 0.9993 & -0.2393 & (0.0004) & $(5.63 \times 10^{-12})$ & 236.61 \\
 &  & (420,560) & $2.70 \times 10^{-17}$ & 419.90 (0.2054) & 0.9998 & -0.0970 & 0.9790 & $-2.71 \times 10^{-17}$ & 311.68 \\
 &  &  &  & 560.06 (0.2387) & 1.0001 & 0.0563 & $(0.0003)$ & $(4.07 \times 10^{-19})$ & 370.00 \\
\cline{2-10}
 & (0.5,0.7) & (45,60) & 33.6897 & 44.51 (0.0715) & 0.9890 & -0.4930 & 1.0159 & 3.2146 & 22.68 \\
 &  &  &  & 59.66 (0.0802) & 0.9944 & -0.3369 & (0.0095) & (1.9189) & 21.53 \\
 &  & (120,160) & 0.0081 & 119.56 (0.1110) & 0.9963 & -0.4418 & 0.9276 & -0.0084 & 66.48 \\
 &  &  &  & 159.68 (0.1293) & 0.9980 & -0.3210 & (0.0006) & $(6.66 \times 10^{-5})$ & 63.17 \\
 &  & (270,360) & $4.68 \times 10^{-10}$  & 269.83 (0.1660) & 0.9994 & -0.1675 & 0.9674 & $-4.73 \times 10^{-10}$ & 154.12 \\
 &  &  &  & 359.94 (0.1910) & 0.9998 & -0.0600 & (0.0004) & $(5.74 \times 10^{-12})$ & 146.64 \\
 &  & (420,560) & $2.70 \times 10^{-17}$ & 419.38 (0.2088) & 0.9985 & -0.6192 & 0.9779 & $-2.84 \times 10^{-17}$ & 241.18 \\
 &  &  &  & 559.60 (0.2392) & 0.9993 & -0.4022 & $(0.0003)$ & $(4.14 \times 10^{-19})$ & 229.67 \\
\bottomrule
\end{tabular}
\end{table}

\renewcommand{\arraystretch}{1}
\begin{table}[H]
    \centering
    \caption{Analysis of the the inter-arrival time data using double sequential procedure (\ref{Stage_1})-(\ref{Stage_2})}
    \label{real_data_analysis}
    \begin{tabular}{ccccccccc}
    \toprule
    $\Delta_1$ & $c_1+c_2$ & $(n_1^*, n_2^*)$ & $b$ & $N_1$ & $N_1/n_1^*$ & $X_{N_1(1)} - Y_{N_2(1)}$ & $\overline{\Psi}^{*}_{(N_1,N_2)}$ & Decision   \\
    &  &  &  & $N_2$ & $N_2/n_2^*$ & & $s_{\Psi^{*}_{(N_1,N_2)}}$ &   \\
    \midrule
    4 & 0.01 & (46.22, 55.67) & 0.930 & 50 & 1.081 & 0 & 0.138 & Fail to Reject $H_0$ \\
     &  &  &  & 58 & 1.041 &  & 0.423 &  \\
    \cline{2-9}
     & 0.001 & (56.97, 68.62) & 0.755 & 59 & 1.036 & 0 & 0.049 & Fail to Reject $H_0$ \\
     &  &  &  & 76 & 1.108 &  & 0.249 & \\
    \hline
    6 & 0.01 & (32.08, 38.63) & 1.340 & 35 & 1.091 & 0 & 0.276 & Fail to Reject $H_0$ \\
     &  &  &  & 41 & 1.061 &  & 1.564 & \\
    \cline{2-9}
     & 0.001 & (39.24, 47.26) & 1.095 & 40 & 1.019 & 0 & 0.670 & Fail to Reject $H_0$ \\
     &  &  &  & 46 & 0.973 &  & 1.102 & \\
    \bottomrule 
    \end{tabular}
\end{table}

\begin{figure}[H]
    \centering
    \includegraphics[width=0.5\linewidth]{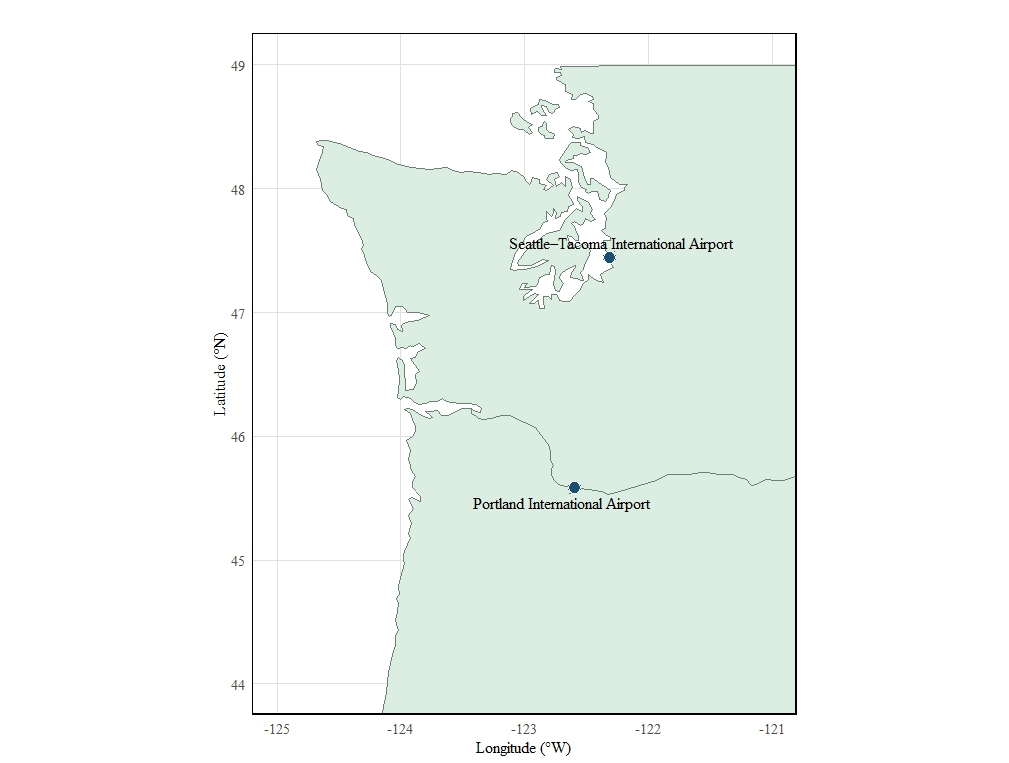}
    \caption{Geographical locations of meteorological stations}
    \label{location_map}
\end{figure}

\begin{figure}[H]
    \centering
    \subfigure[$\Delta_1=6$]{
    \includegraphics[width=0.4\linewidth]{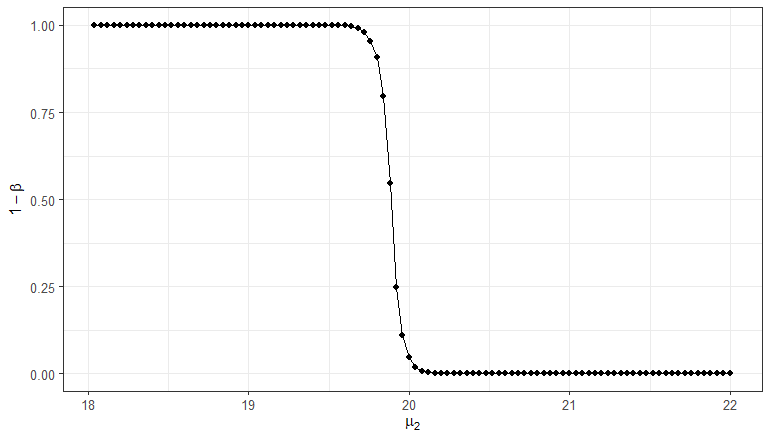}
    \label{power_curve_6}
    }
    \subfigure[$\Delta_1=7$]{
    \includegraphics[width=0.4\linewidth]{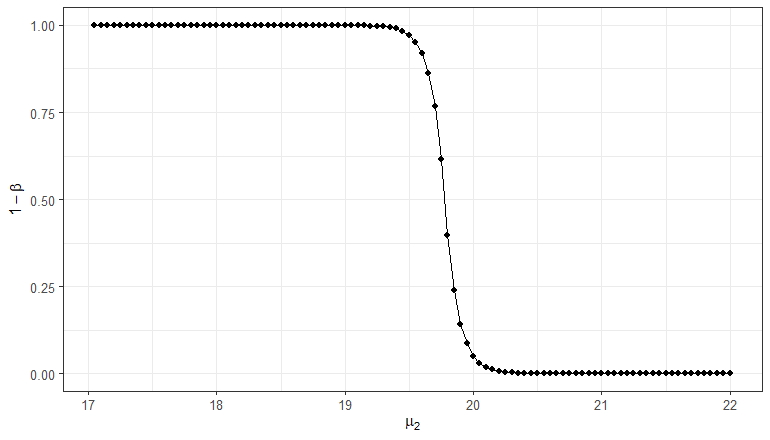}
    \label{power_curve_7}
    }    
    \caption{Power curve with respect to different values of $\mu_2$}
\end{figure}

\begin{figure}[H]
    \centering
    \subfigure[$(N_1,\,N_2)$ versus $(c_1+c_2)$]{
    \includegraphics[width=0.35\linewidth]{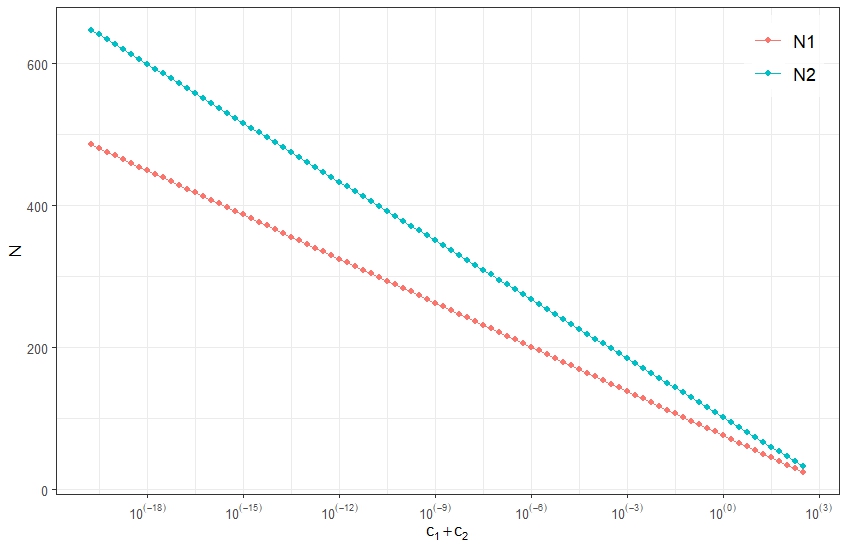}
    }
    \subfigure[$\alpha$ versus $(c_1+c_2)$]{
    \includegraphics[width=0.35\linewidth]{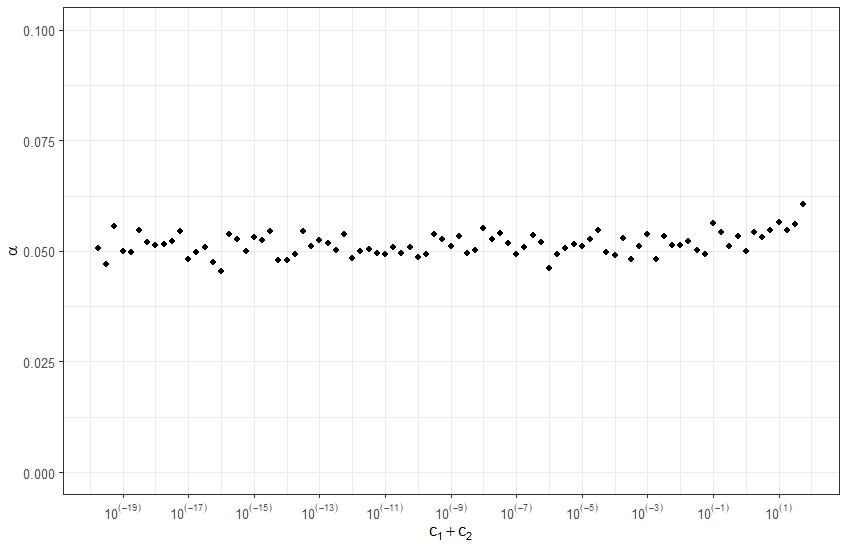}
    }\\
    \subfigure[$(1-\beta)$ versus $(c_1+c_2)$]{
    \includegraphics[width=0.35\linewidth]{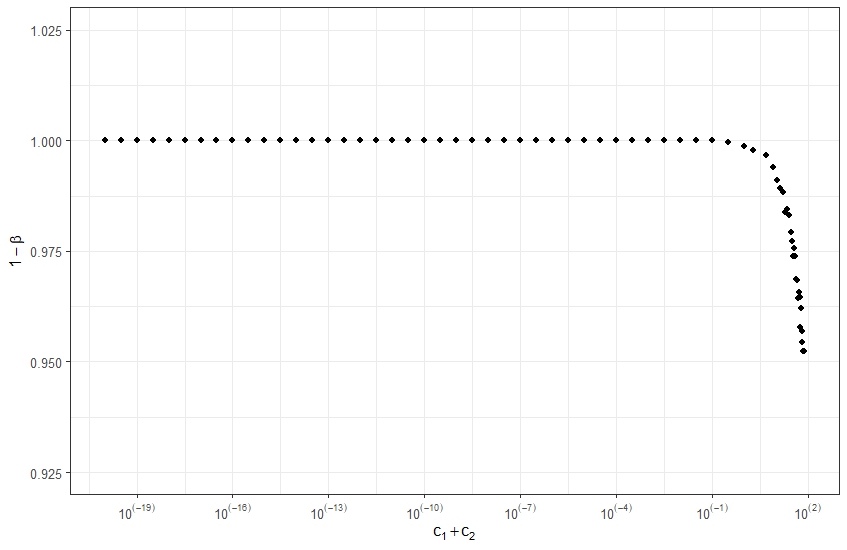}
    }
    \subfigure[$(\xi_1+\xi_2)$ versus $(c_1+c_2)$]{
    \includegraphics[width=0.35\linewidth]{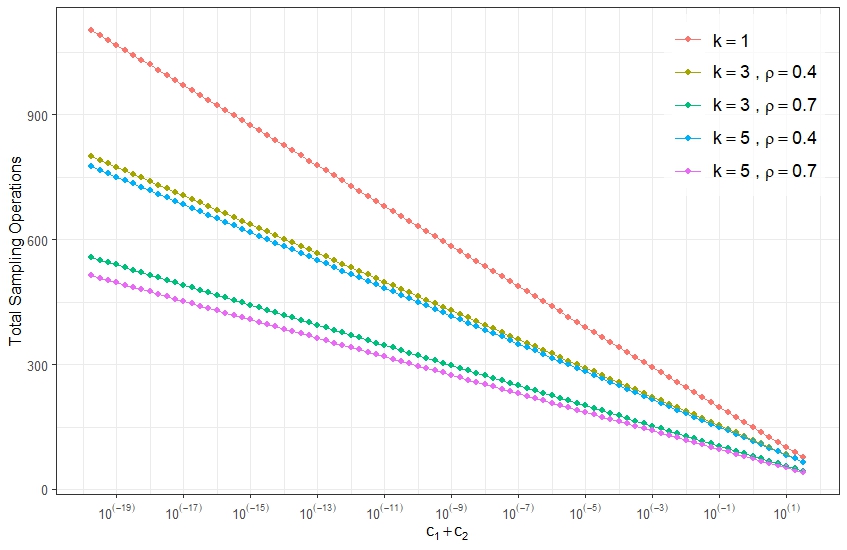}
    }
    \caption{Sensitivity of  $(N_1,\,N_2)$, $\alpha$, $(1-\beta)$ and $(\xi_1+\xi_2)$ for different values of $(c_1+c_2)$}
    \label{sensitivity_wrt_cost}
\end{figure}

\begin{figure}[H]
    \centering
    \subfigure[$(N_1,\,N_2)$ vs $m$]{
    \includegraphics[width=0.37\linewidth]{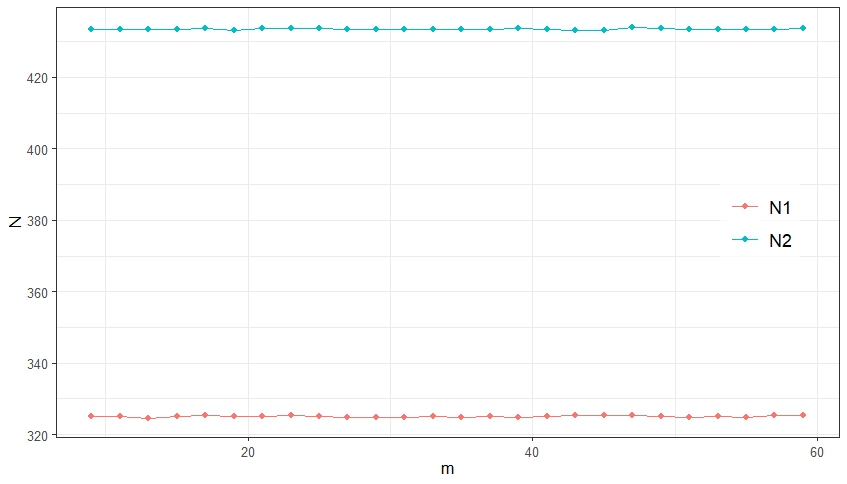}
    }
    \subfigure[$\alpha$ vs $m$]{
    \includegraphics[width=0.37\linewidth]{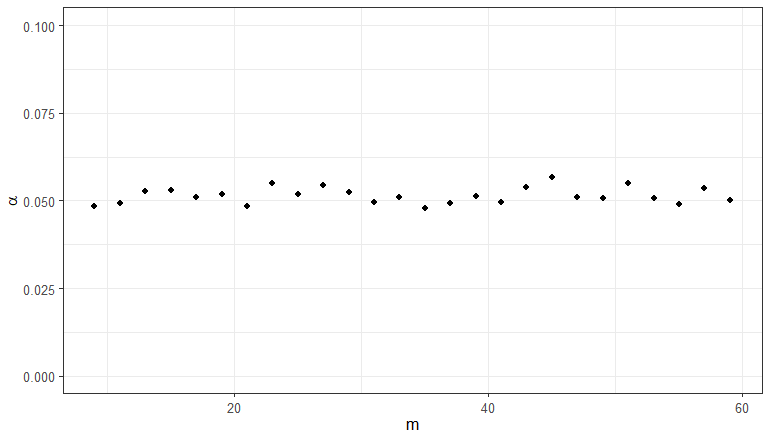}
    }\\
    \subfigure[Regret vs $m$]{
    \includegraphics[width=0.37\linewidth]{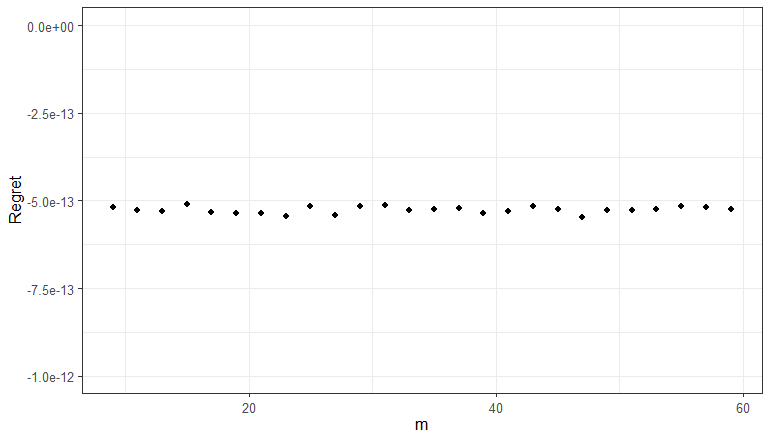}
    }
    \caption{Sensitivity of $(N_1,\,N_2)$, $\alpha$ and regret for different values of $m$}
    \label{sensitivity_wrt_m}
\end{figure}

\begin{figure}[H]
    \centering
    \subfigure[Seattle-Tacoma]{
    \includegraphics[width=0.40\textwidth]{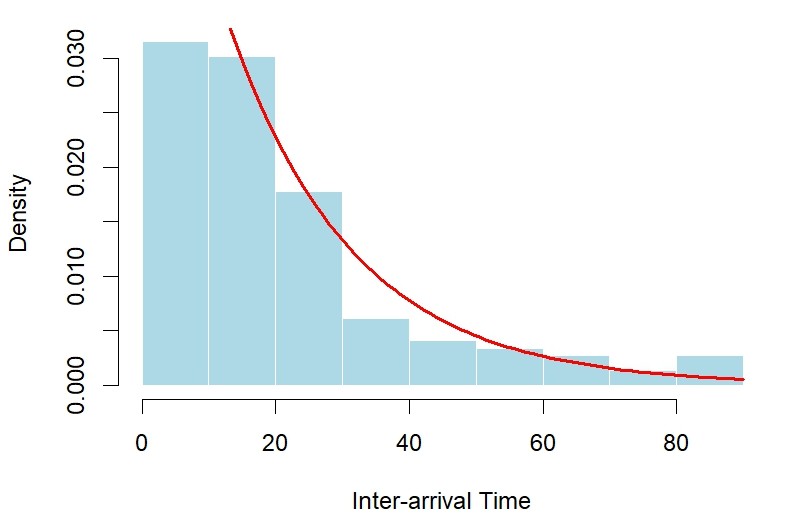}
    }
    \subfigure[Portland]{
    \includegraphics[width=0.40\textwidth]{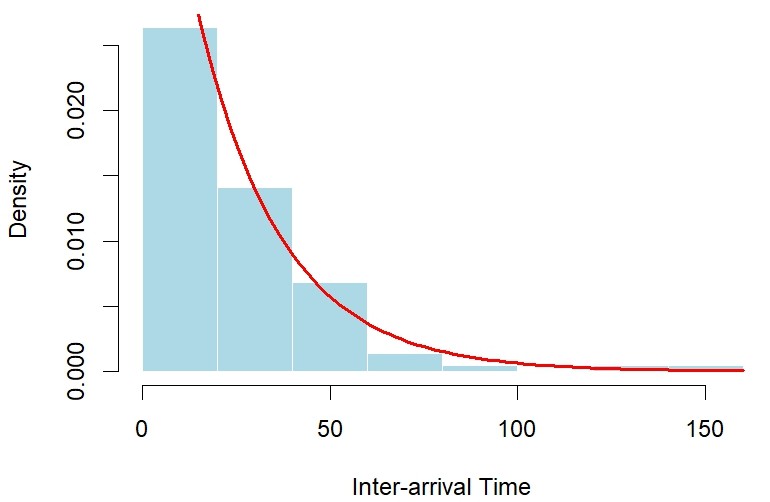}
    }
    \caption{Histograms for the inter-arrival times}
    \label{Histogram}
\end{figure}

\begin{figure}[H]
    \centering
    \subfigure[Seattle-Tacoma]{
    \includegraphics[width=0.40\textwidth]{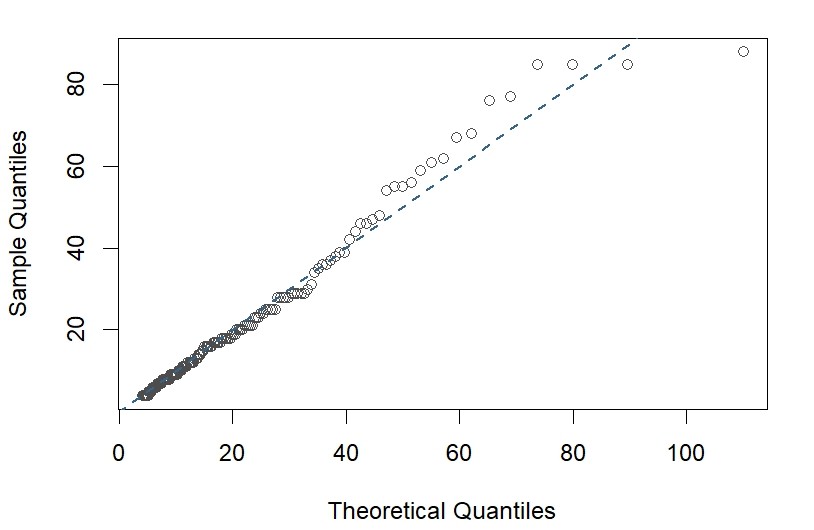}
    }
    \subfigure[Portland]{
    \includegraphics[width=0.40\textwidth]{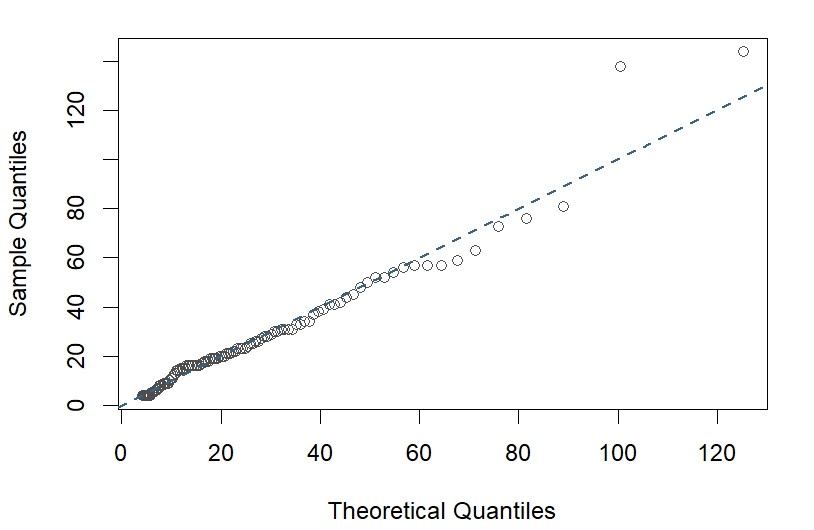}
    }
    \caption{Q-Q plots for the inter-arrival times}
    \label{QQPlot}
\end{figure}

\begin{figure}[H]
    \centering
    \subfigure[Seattle-Tacoma]{
    \includegraphics[width=0.40\textwidth]{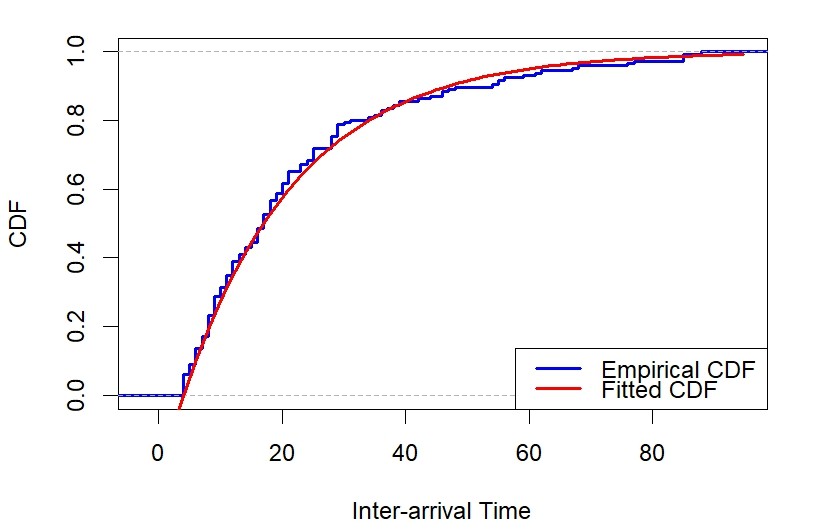}
    }
    \subfigure[Portland]{
    \includegraphics[width=0.40\textwidth]{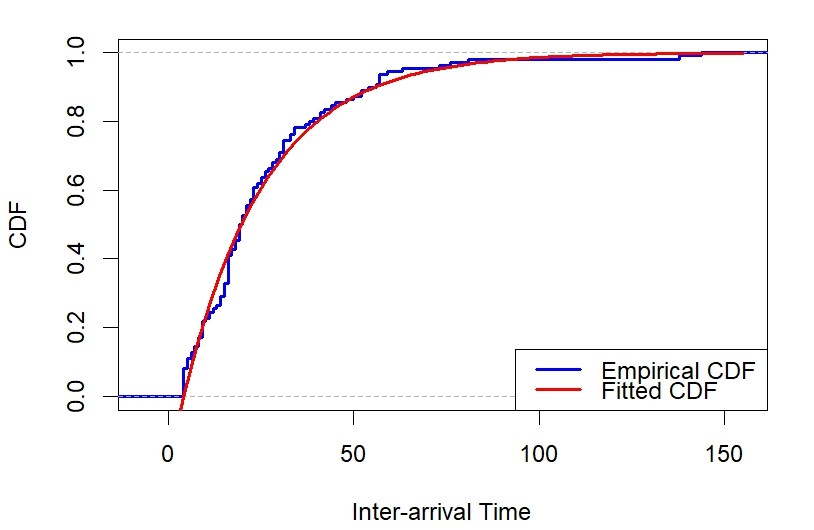}
    }
    \caption{ECDF plots of the inter-arrival times}
    \label{ECDF}
\end{figure}

\clearpage

\appendix
\section*{Appendix: Proofs}
\label{Appendix}
To establish the first- and second-order asymptotic properties of the proposed double-sequential sampling scheme, it is instructive to first consider the corresponding purely sequential sampling scheme defined in \eqref{purely_seq_rule} along with its associated results.
\bigskip\\
\indent Under assumption \eqref{pilot_sample_assumption}, for $k_1=k_2 = 1$, we can write \eqref{purely_seq_rule} as
\begin{align}
\label{woodroofe_eqn}
    N_i' &= \inf \left\{ n \geq m_{i0}+1 : n \geq U^{(i)}_{n} d_{\alpha} \right\},\notag\\
    N_i'^* &= \inf \left\{ n \geq m_{i0} : (n+1) \geq U^{(i)}_{n+1} d_{\alpha} \right\}, \,\ \text{where } N_i'^{*} = N_i' - 1 \notag\\
    &= \inf \left\{ n \geq m_{i0} : \sum_{i=1}^{n} W_i \leq \left( \frac{1}{n_i^*} \right) n^2 \left( 1 + \frac{1}{n} \right) \right\}.
\end{align}
where $W_1, W_2, \dots, W_n$ are independent and identically distributed (i.i.d.) $\mathrm{Exponential}(1)$ random variables. In particular, $\mathbb{P}(W_i \leq w) = 1 - e^{-w} \leq w, \,\ \forall\, w > 0.$
\bigskip\\
\indent By comparing \eqref{woodroofe_eqn} with Equation~(1.1) of \textcite{woodroofe_1977}, and using the notation therein, we obtain
\[
S_n = \sum_{i=1}^{n} W_i,\ c = \frac{1}{n_i^*}, \ \alpha = 2, \ \beta = 1, \ \mu = 1, \ \tau^2 =1,\lambda = n_i^*,\ L(n) = \left(1 + \frac{1}{n} \right), \ L(0) = 1, \ a=1, \ b=1.
\]
Since \( N_i' = N_i'^* + 1 \) almost surely, it follows from \textcite{woodroofe_1977} that, as $q \to \infty$ with $q = \log \left(\frac{1}{c_1 + c_2}\right)$,
\begin{align}
\label{prop1_purely}
    \begin{aligned}
        \text{ (i) }\quad & \mathbb{E}_{\boldsymbol{\theta}}(N_i') = n_i^* + a_2 + o(1)\ \text{ where }\ a_2 = \nu_\alpha(1,1) - 1 \approx -1/4 ,\\
        \text{ (ii) }\quad & Q_i'=\frac{N_i' - n_i^*}{\sqrt{n_i^*}} \xrightarrow{d} \mathcal{N}(0,1),\\
        \text{ (iii) }\quad & \mathbb{P}_{\boldsymbol{\theta}}(N_i'\leq \epsilon n_i^*) = O(n_i^{*-(m-1)})\ \text{ if $0<\epsilon<1$ fixed},\\
        \text{ (iv) }\quad & \left|Q_i'\right|^s\ \text{ is uniformly integrable $\forall s>0$ if }  m>s/2+1.
    \end{aligned}
\end{align}

\noindent Next, we know that, for all fixed $n_i\geq m,\ i=1,2$,\ $X_{n_1(1)}$ and $Y_{n_2(1)}$ are independent of $(U^{(1)}_{m},\ U^{(1)}_{m+1}, \ldots, U^{(1)}_{n_1})$ and $(U^{(2)}_{m},\ U^{(2)}_{m+1}, \ldots, U^{(2)}_{n_2})$. Let $\boldsymbol{N'}=(N_1', N_2')$ then, under the purely sequential procedure defined in \eqref{purely_seq_rule}, for all $n_i\geq m,\ i=1,2$, the event $I(\boldsymbol{N'}=\boldsymbol{n})$ depends solely on $(U^{(1)}_{m},\ U^{(1)}_{m+1}, \ldots, U^{(1)}_{n_1})$ and $(U^{(2)}_{m},\ U^{(2)}_{m+1}, \ldots, U^{(2)}_{n_2})$. Therefore, we can conclude that
\begin{equation}
\label{indep_of_N=n_and_min_o.s.}
    I(\boldsymbol{N'}=\boldsymbol{n})\text{ is independent of } \left(X_{n_1(1)},\,Y_{n_2(1)}\right).
\end{equation}
Now, the probability of type I error can be expressed as
\begin{equation}
\label{type-I error for random N}
    \begin{split}
        \mathbb{P}_{\boldsymbol{\theta,N'}}\left(\text{Type I error}\right) &= \mathbb{E}_{\boldsymbol{\theta}}\left[\mathbb{P}_{\boldsymbol{\theta}}\left(X_{N_1'(1)}-Y_{N_2'(1)} > b \mid H_0 \right)\right]\\
        &= \mathbb{E}_{\boldsymbol{\theta}}\left[\left(\frac{\sigma_1}{N_1'}+\frac{\sigma_2}{N_2'}\right)^{-1}\frac{\sigma_1}{N_1'}\exp\left\{-\frac{N_1'(b-\Delta_0)}{\sigma_1} \right\} \right],
    \end{split}
\end{equation}
Let $U=\frac{N_1'}{n_1^*}$, $V=\frac{N_2'}{n_2^*}$ and $l_1= d_\alpha(b-\Delta_0) >0$, then
\begin{equation}
    \mathbb{P}_{\boldsymbol{\theta,N'}}\left(\text{Type I error}\right) = \mathbb{E}_{\boldsymbol{\theta}}\left[\left(1+\frac{U}{V}\right)^{-1}\exp(-l_1U) \right]=\mathbb{E}_{\boldsymbol{\theta}}\left[g(U,V)\right].
\end{equation}
Next, applying Taylor’s series expansion around $(U,V) = (1,1)$, we get
\begin{equation}
\label{type-I_error_2nd_order_exp1}
    \begin{split}
        \mathbb{P}_{\boldsymbol{\theta,N'}}\left(\text{Type I error}\right) &= \mathbb{E}_{\boldsymbol{\theta}}\Bigg[g(1,1) + (U-1)\left.\frac{\partial g}{\partial U}\right|_{(1,1)}+ (V-1)\left.\frac{\partial g}{\partial V}\right|_{(1,1)}+ \frac{1}{2} (U-1)^2 \left.\frac{\partial^2 g}{\partial U^2}\right|_{(1,1)}\\
        &\qquad + (U-1)(V-1) \left.\frac{\partial^2 g}{\partial U \partial V}\right|_{(1,1)}+ \frac{1}{2}(V-1)^2 \left.\frac{\partial^2 g}{\partial V^2}\right|_{(1,1)}\Bigg] + R_1(q^{-1}) \\
        &= \alpha + \frac{\alpha}{d_\alpha}\left[-\frac{l_1a_2}{\sigma_1}-\frac{a_2}{2\sigma_1}+\frac{a_2}{2\sigma_2}+\frac{l_1^2}{2\sigma_1}+\frac{l_1}{2\sigma_1}+\frac{1}{4\sigma_1}-\frac{1}{4\sigma_2}-\frac{l_1a_2^2}{2\sigma_1\sigma_2d_\alpha}\right]+ R_1(q^{-1}),
\end{split}
\end{equation}
where, 
\begin{align*}
R_1(q^{-1}) 
&= \mathbb{E}_{\boldsymbol{\theta}}\Big[R_{11} + R_{12} + R_{13} + R_{14}\Big]\\
&= \mathbb{E}_{\boldsymbol{\theta}}\Bigg[\frac{1}{6}
(U-1)^3 \left.\frac{\partial^3 g}{\partial U^3}\right|_{(S_1,S_2)} + \frac{1}{2}(U-1)^2(V-1) \left.\frac{\partial^3 g}{\partial U^2 \partial V}\right|_{(S_1,S_2)} + \frac{1}{2}(U-1)(V-1)^2 \left.\frac{\partial^3 g}{\partial U \partial V^2}\right|_{(S_1,S_2)}\\ 
&\hspace{1 cm} + \frac{1}{6}(V-1)^3 \left.\frac{\partial^3 g}{\partial V^3}\right|_{(S_1,S_2)}
\Bigg],
\end{align*}
with $S_1$ and $S_2$ being intermediate random variables lying between $1$ and $U$, and $1$ and $V$, respectively.
\bigskip\\
\indent Let $H_1 = (S_1^{-1} + S_2^{-1}),$ then
\begin{align*}
    R_{11} &=  \sigma_1^{-3/2} d_\alpha^{-3/2} \exp(-l_1S_1) \Big[ -S_1^{-4}S_2^{-3}H_1^{-4}-S_1^{-3}S_2^{-2}l_1H_1^{-3}-\frac{1}{2}S_1^{-2}S_2^{-1}l_1^2H_1^{-2}-\frac{1}{6}S_1^{-1}S_2^{0}l_1^3H_1^{-1}\Big]\, Q_1'^3,\\    
    R_{12} &= \sigma_1^{-1}\sigma_2^{-1/2}  d_\alpha^{-3/2} \exp(-l_1S_1) \Big [-2S_1^{-3}S_2^{-3}H_1^{-3}+3S_1^{-3}S_2^{-4}H_1^{-4}-S_1^{-2}S_2^{-2}l_1H_1^{-2}+2S_1^{-2}S_2^{-3}l_1H_1^{-3}\\
    &\hspace{5cm}+\frac{1}{2}S_1^{-1}S_2^{-2}l_1^2H_1^{-2}\Big]\, Q_1'^2Q_2', \\
    R_{13} &= \sigma_1^{-1/2}\sigma_2^{-1}  d_\alpha^{-3/2} \exp(-l_1S_1)\Big [-S_1^{-3}S_2^{-3}H_1^{-3}+3S_1^{-3}S_2^{-4}H_1^{-4}+S_1^{-2}S_2^{-3}l_1H_1^{-3}\Big]\, Q_2'^2Q_1', \\
    R_{14} &= \sigma_2^{-3/2} d_\alpha^{-3/2} \exp(-l_1S_1)\Big[S_1^{-3}S_2^{-4}H_1^{-4}\Big]\, Q_2'^3.
\end{align*}
We use the fact that, for $a>0$ and two random variables $X, Y > 0$ almost surely, one has
\begin{equation}
    \label{H_bound}
    (X^{-1} + Y^{-1})^{-j} e^{-a Y} \le \frac{i!}{a^iY^{(i-j)}} \,\ \text{almost surely},
\end{equation}
where $i, j \in \mathbb{N}$ are arbitrary natural numbers with $i \ge j$. Thus
\begin{align*}
    |R_{11}| &\leq  \sigma_1^{-3/2} d_\alpha^{-3/2} \Big[ 24S_1^{-4}S_2^{-3}l_1^{-4}+6S_1^{-3}S_2^{-2}l_1^{-2}+S_1^{-2}S_2^{-1}+\frac{1}{6}S_1^{-1}S_2^{0}l_1^{2}\Big]\, \left|Q_1'^3\right|,\\    
    |R_{12}| &\leq \sigma_1^{-1}\sigma_2^{-1/2}  d_\alpha^{-3/2}\Big [12S_1^{-3}S_2^{-3}l_1^{-3}+72S_1^{-3}S_2^{-4}l_1^{-4}+2S_1^{-2}S_2^{-2}l_1^{-1}+12S_1^{-2}S_2^{-3}l_1^{-2}+S_1^{-1}S_2^{-2}\Big]\, \left|Q_1'^2Q_2'\right|, \\
    |R_{13}| &\leq \sigma_1^{-1/2}\sigma_2^{-1}  d_\alpha^{-3/2} \Big [6S_1^{-3}S_2^{-3}l_1^{-3}+72S_1^{-3}S_2^{-4}l_1^{-4}+6S_1^{-2}S_2^{-3}l_1^{-2}\Big]\, \left|Q_2'^2Q_1'\right|, \\
    |R_{14}| &\leq \sigma_2^{-3/2} d_\alpha^{-3/2} \Big[24S_1^{-3}S_2^{-4}l_1^{-4}\Big]\, \left|Q_2'^3\right|.
\end{align*}
Now, we decompose the sample space into events $E_1', E_2', E_3', E_4'$ based on whether $N_i > \varepsilon C_i$ or $N_i \le \varepsilon C_i$, where $0 < \epsilon < 1$ and 
\begin{equation}
    \label{event_decomposition}
    \begin{split}
        E_1' = [N_1' > \epsilon n^*_1] \cap [N_2' > \epsilon n^*_2], \quad E_2' = [N_1' > \epsilon n^*_1] \cap [N_2' \le \epsilon n^*_2],\\
    E_3' = [N_1' \le \epsilon n^*_1] \cap [N_2' > \epsilon n^*_2], \quad E_4' = [N_1' \le \epsilon n^*_1] \cap [N_2' \le \epsilon n^*_2].
    \end{split}
\end{equation}
\textbf{Case I: ($E_1'$).} Here, $S_1$ and $S_2$ are uniformly bounded away from zero, ensuring that their inverse powers remain bounded. Moreover, $\mathbb{E}_{\boldsymbol{\theta}}\left(|Q_1'|^3\right) = O(1)$ (by \ref{prop1_purely}) provided that $m > 5/2$. Thus, for any fixed positive constant $B_{11}$ independent of $q$, we have
\begin{equation}
    \label{R1_under_good_events1}
    \big|\mathbb{E}_{\boldsymbol{\theta}}[R_{11}I(E_1')]\big| \leq B_{11}d_\alpha^{-3/2}\mathbb{E}_{\boldsymbol{\theta}}\left[|Q_1'|^3\right]=O(q^{-3/2})=o(q^{-1}).
\end{equation}
Similarly
\begin{equation}
    \label{R1_under_good_events2}
    \big|\mathbb{E}_{\boldsymbol{\theta}}[R_{1i}I(E_1')] \big|= o(q^{-1}),\quad i=2,3,4.
\end{equation} 
\textbf{Case II: ($E_2', E_3', E_4'$).} 
In this case, $S_1$ and $S_2$ may approach zero; however, the probabilities of the corresponding events satisfy
\[
\mathbb{P}_{\boldsymbol{\theta}}(E_2')=\mathbb{P}_{\boldsymbol{\theta}}(E_3')= O\!\left(q^{-(m-1)}\right) \,\ \text{and} \,\ 
\mathbb{P}_{\boldsymbol{\theta}}(E_4')=O\!\left(q^{-2(m-1)}\right),
\]
which are asymptotically negligible.
\bigskip\\
\indent Therefore, under the condition $m \geq 8$, it follows that
\begin{equation}
\label{R1_under_bad_events}
    \big|\mathbb{E}_{\boldsymbol{\theta}}[R_{1i}I(E_j)]\big|= o(q^{-1}),\quad i=1,2,3,4,\ j=2,3,4. 
\end{equation}
Combining \eqref{R1_under_good_events1}-\eqref{R1_under_bad_events}, we conclude that
\begin{equation}
    \label{|R_1|_bound}
    \left|R_1(q^{-1})\right| \leq \sum_{i=1}^{4}\left|\mathbb{E}_{\boldsymbol{\theta}}(R_{1i})\right| = o(q^{-1}).
\end{equation}
Finally, using \eqref{type-I_error_2nd_order_exp1} and $a_2 \approx -\tfrac{1}{4}$, we obtain
\begin{equation}
\label{type-I_error_2nd_order_exp2}
        \mathbb{P}_{\boldsymbol{\theta,N'}}\left(\text{Type I error}\right) \approx \alpha + \frac{\alpha}{d_\alpha}\left[\frac{3}{8\sigma_1}-\frac{3}{8\sigma_2}+\frac{l_1^2}{2\sigma_1}+\frac{3l_1}{4\sigma_1}-\frac{l_1}{32\sigma_1\sigma_2d_\alpha}\right] + o(q^{-1}).
\end{equation}

\noindent Now, in a manner similar to that in \eqref{type-I error for random N}–\eqref{type-I_error_2nd_order_exp2}, the risk function is given by
\begin{equation}
    R_{\boldsymbol{N'}}(\boldsymbol{\mu},\boldsymbol{c}) = \mathbb{E}_{\boldsymbol{\theta}}\left[ A \left(\frac{\sigma_1}{N_1'}+\frac{\sigma_2}{N_2'}\right)^{-1}\frac{\sigma_2}{N_2'}\exp\left\{-\frac{N_2'(\Delta_1-b)}{\sigma_2} \right\} + c_1 N_1' \sigma_1^{-1} + c_2 N_2' \sigma_2^{-1}\right].
\end{equation}
Let $l_2= d_\alpha(\Delta_1-b)=\log \left\{\frac{A(\Delta_1-\Delta_0)}{2(c_1+c_2)}\right\} >0$, then
\begin{align*}
   R_{\boldsymbol{N'}}(\boldsymbol{\mu},\boldsymbol{c}) &= \mathbb{E}_{\boldsymbol{\theta}}\left[ A \left(1+\frac{V}{U}\right)^{-1}\exp(-l_2V) + \left(c_1 U + c_2 V\right)d_\alpha\right]=\mathbb{E}_{\boldsymbol{\theta}}\left[h(U,V)\right].
\end{align*}
Next, applying Taylor’s series expansion around $(U,V) = (1,1)$, we get
\begin{equation}
    \label{regret_2nd_order_exp1}
    \begin{split}
        R_{\boldsymbol{N'}}(\boldsymbol{\mu},\boldsymbol{c}) &= R_{\boldsymbol{n^*}}(\boldsymbol{\mu},\boldsymbol{c})+\frac{A\exp(-l_2)}{4 d_\alpha}\left[\frac{a_2}{\sigma_1}-\frac{2l_2a_2}{\sigma_2}-\frac{a_2}{\sigma_2}-\frac{1}{2\sigma_1}+\frac{l_2^2}{\sigma_2}+\frac{l_2}{\sigma_2}+\frac{1}{2\sigma_2}-\frac{l_2a_2^2}{\sigma_1\sigma_2d_\alpha}\right]\\
        &\quad +\left(\frac{c_1}{\sigma_1}+\frac{c_2}{\sigma_2}\right)a_2 + R_2(q^{-1}),
    \end{split}
\end{equation}
where
\begin{align*}
R_2(q^{-1}) 
&= \mathbb{E}_{\boldsymbol{\theta}}\Big[R_{21} + R_{22} + R_{23} + R_{24}\Big]\\
&= \mathbb{E}_{\boldsymbol{\theta}}\Bigg[\frac{1}{6}
(U-1)^3 \left.\frac{\partial^3 h}{\partial U^3}\right|_{(M_1,M_2)} + \frac{1}{2}(U-1)^2(V-1) \left.\frac{\partial^3 h}{\partial U^2 \partial V}\right|_{(M_1,M_2)} + \frac{1}{2}(U-1)(V-1)^2 \left.\frac{\partial^3 h}{\partial U \partial V^2}\right|_{(M_1,M_2)}\\ 
&\hspace{1 cm} + \frac{1}{6}(V-1)^3 \left.\frac{\partial^3 h}{\partial V^3}\right|_{(M_1,M_2)}
\Bigg],
\end{align*}
where $M_1$ and $M_2$ are intermediate random variables lying between $1$ and $U$, and $1$ and $V$, respectively.
\bigskip\\
\indent Define $H_2 = M_1^{-1} + M_2^{-1}$, then
\begin{align*}
    R_{21} &=  A\sigma_1^{-3/2} d_\alpha^{-3/2} \exp(-l_2M_2) \Big[ M_1^{-4}M_2^{-3}H_2^{-4}\Big]Q_1'^3,\\
    R_{22} &= A\sigma_1^{-1}\sigma_2^{-1/2}  d_\alpha^{-3/2} \exp(-l_2M_2) \Big [- M_1^{-3}M_2^{-3}H_2^{-3} + 3 M_1^{-4}M_2^{-3}H_2^{-4} + M_1^{-3}M_2^{-2}l_2H_2^{-3}\Big] Q_1'^2Q_2', \\
    R_{23} &= A\sigma_1^{-1/2}\sigma_2^{-1}  d_\alpha^{-3/2} \exp(-l_2M_2)\Big [-2M_1^{-3}M_2^{-3}H_2^{-3} +3 M_1^{-4}M_2^{-3}H_2^{-4} -  M_1^{-2}M_2^{-2}l_2H_2^{-2}+ 2 M_1^{-3}M_2^{-2}l_2H_2^{-3} \\ 
    &\hspace{5.5cm} + \frac{1}{2} M_1^{-2}M_2^{-1}l_2^2H_2^{-2}\Big] Q_2'^2Q_1', \\
    R_{24} &= A\sigma_2^{-3/2} d_\alpha^{-3/2} \exp(-l_2M_2)\Big[- M_1^{-3}M_2^{-4}H_2^{-4} - M_1^{-2}M_2^{-3}l_2H_2^{-3} - \frac{1}{2} M_1^{-1}M_2^{-2}l_2^2H_2^{-2}-\frac{1}{6} M_1^{0}M_2^{-1}l_2^3 H_2^{-1}\Big] Q_2'^3.
\end{align*}
Using the fact \eqref{H_bound}, we have
\begin{align*}
    |R_{21}|&\leq A\sigma_1^{-3/2} d_\alpha^{-3/2} \Big[24M_1^{-4}M_2^{-3}l_2^{-4}\Big]\left|Q_1'^3\right|,\\
    |R_{22}| &\leq A\sigma_1^{-1}\sigma_2^{-1/2}   d_\alpha^{-3/2} \Big [ 6M_1^{-3}M_2^{-3}l_2^{-3} + 72 M_1^{-4}M_2^{-3}l_2^{-4} + 6M_1^{-3}M_2^{-2}l_2^{-2}\Big]\left|Q_1'^2Q_2'\right|,\\
    |R_{23}|&\leq A\sigma_1^{-1/2}\sigma_2^{-1}  d_\alpha^{-3/2}\Big [12M_1^{-3}M_2^{-3}l_2^{-3} +72 M_1^{-4}M_2^{-3}l_2^{-4} + 2 M_1^{-2}M_2^{-2}l_2^{-1}+ 12 M_1^{-3}M_2^{-2}l_2^{-2} + M_1^{-2}M_2^{-1}l_2^{0}\Big]\left|Q_2'^2Q_1'\right|,  \\
    |R_{24}| &\leq A\sigma_2^{-3/2} d_\alpha^{-3/2} \Big [24 M_1^{-3}M_2^{-4}l_2^{-4} + 6 M_1^{-2}M_2^{-3}l_2^{-2} + M_1^{-1}M_2^{-2}l_2^{0}+M_1^{0}M_2^{-3}l_2^{0}\Big]\left|Q_2'^3\right|.
\end{align*}
Considering the same decomposition of events \eqref{event_decomposition}, for any fixed positive constant $B_{21}$ independent of $q$, we have
\begin{equation}
    \label{R2_under_good_events1}
    \big|\mathbb{E}_{\boldsymbol{\theta}}[R_{21}I(E_1')]\big| \leq B_{21}d_\alpha^{-3/2}l_2^{-4}\mathbb{E}_{\boldsymbol{\theta}}\left[|Q_1'|^3\right]=o(q^{-1}).
\end{equation}
Similarly
\begin{equation}
    \label{R2_under_good_events2}
    \big|\mathbb{E}_{\boldsymbol{\theta}}[R_{2i}I(E_1')] \big|= o(q^{-1}),\,\ i=2,3,4,
\end{equation} 
and for $m \geq 4$, we have 
\begin{equation}
\label{R2_under_bad_events}
    \big|\mathbb{E}_{\boldsymbol{\theta}}[R_{2i}I(E_j)]\big|= o(q^{-1}),\quad i=1,2,3,4,\ j=2,3,4. 
\end{equation}
On combining \eqref{R2_under_good_events1}-\eqref{R2_under_bad_events}, we can conclude that 
\begin{equation}
    \left|\mathbb{E}_{\boldsymbol{\theta}}(R_2)\right| \leq \sum_{i=1}^{4}\left|\mathbb{E}_{\boldsymbol{\theta}}(R_{2i})\right| = o(q^{-1}).
\end{equation}
Thus, using \eqref{regret_2nd_order_exp1} and $a_2 \approx -\tfrac{1}{4}$, the regret can be expressed as
\begin{equation}
    \label{regret_2nd_order_exp2}
    \begin{split}
        \text{(Regret)}_{\boldsymbol{N'}}&=R_{\boldsymbol{N'}}(\boldsymbol{\mu},\boldsymbol{c})-R_{\boldsymbol{n^*}}(\boldsymbol{\mu},\boldsymbol{c})\\ 
        &\approx \frac{A\exp(-l_2)}{4 d_\alpha}\left[-\frac{3}{4\sigma_1}+\frac{3}{4\sigma_2}+\frac{l_2^2}{\sigma_2}+\frac{3l_2}{2\sigma_2}-\frac{l_2}{16\sigma_1\sigma_2d_\alpha}\right]-\frac{1}{4}\left(\frac{c_1}{\sigma_1}+\frac{c_2}{\sigma_2}\right) + o(q^{-1}).
    \end{split}
\end{equation}

\begin{proof}[\textbf{Proof of Lemma \ref{lemma1}}]
    For a given $0<\epsilon<\rho^{-1}-1$, consider
    \[
    \mathbb{P}_{\boldsymbol{\theta}}\{N_i' < T_i\} \leq \ \mathbb{P}_{\boldsymbol{\theta}}\{N_i' \leq (1+\epsilon)\rho n_i^*\} + \mathbb{P}_{\boldsymbol{\theta}}\{T_i > (1+\epsilon)\rho n_i^* \}.
    \]
    From \eqref{prop1_purely}, we have 
    \begin{equation}
    \label{Prob1_lemma1}
        \mathbb{P}_{\boldsymbol{\theta}}\{N_i' \leq (1+\epsilon)\rho n_i^*\} \leq O(n_i^{*-(m-1)}).
    \end{equation}
    Under assumption \eqref{pilot_sample_assumption}, rewrite the stopping rule \eqref{Stage_1} as 
    \[
    L_i^* = \inf \left\{ n \geq m_{i0} : \sum_{i=1}^{n} D_i \leq \left(\frac{2k^2}{\rho n_i^*}\right)n^2\left(1+\frac{1}{kn}\right) \right\},
    \]
    where, $L_i^* = L_i + m_{i0}$ and $D_1,D_2,\ldots,D_n\overset{\text{i.i.d.}}{\sim} \chi^2_{(2k)}$. As $T_i = kL_i^* + 1$ with probability 1, it follows from \textcite{woodroofe_1977} that, as $q \to \infty$
    \begin{align}
        \begin{aligned}
        \label{Ti_prop}
        \text{(i)}\quad & (\rho n_i^*)^{-1/2}(T_i-\rho n_i^*) \xrightarrow{d} \mathcal{N}(0,1),\\
        \text{(ii)}\quad & \big |(\rho n_i^*)^{-1/2}(T_i-\rho n_i^*) \big |^s \text{ is uniformly integrable } \forall s > 0 \text{ if } m>\frac{s}{2}+1.
        \end{aligned}
    \end{align}
    Therefore by Markov's inequality 
    \begin{equation}
    \label{Prob2_lemma1}
        \begin{split}
            \mathbb{P}_{\boldsymbol{\theta}}\{T_i > (1+\epsilon)\rho n_i^* \} &= \mathbb{P}_{\boldsymbol{\theta}}\left\{(\rho n_i^*)^{-1/2}(T_i-\rho n_i^*) > \epsilon (\rho n_i^*)^{1/2}\right\} \\
        &\leq \frac{\mathbb{E}_{\boldsymbol{\theta}}\bigg\{\Big |(\rho n_i^*)^{-1/2}(T_i-\rho n_i^*) \Big |^s\bigg\}}{\epsilon^s (\rho n_i^*)^{s/2}} = O(n_i^{*-s/2}), \quad \text{ if } m>s/2+1,\ s>0.
        \end{split}
    \end{equation}
    Combining \eqref{Prob1_lemma1} and \eqref{Prob2_lemma1} yields the desired result.
\end{proof}

\begin{proof}[\textbf{Proof of Theorem \ref{thm1}}]
Since $N_i \geq N_i'$ on $\{N_i' < T_i\}$ and $N_i = N_i'$ otherwise, it follows that $N_i \geq N_i'$ almost surely. Let $I_A$ denote the indicator function of an event $A$. Then,
\begin{align*}
\mathbb{E}_{\boldsymbol{\theta}}(N_i) - \mathbb{E}_{\boldsymbol{\theta}}(N_i')
&= \mathbb{E}_{\boldsymbol{\theta}}\left[(N_i - N_i') I_{\{N_i' < T_i\}}\right] \\
&\leq \mathbb{E}_{\boldsymbol{\theta}}\left[T_i \, I_{\{N_i' < T_i\}}\right] \\
&\leq \left[\mathbb{E}_{\boldsymbol{\theta}}(T_i^3)\right]^{1/3}
\left[\mathbb{P}_{\boldsymbol{\theta}}(N_i' < T_i)\right]^{2/3}
\quad \text{(by Hölder's inequality)}.
\end{align*}
Consequently, from \eqref{Ti_prop} and Lemma \ref{lemma1} we have 
\begin{align*}
    \mathbb{E}_{\boldsymbol{\theta}}(N_i) - \mathbb{E}_{\boldsymbol{\theta}}(N_i') &= O(n_i^*) \, O(n^{*-s/3})\\ 
    &= o(1), \quad \text{if } m > 5/2.
\end{align*}
The result in part (ii) now follows from the expression for $\mathbb{E}_{\boldsymbol{\theta}}(N_i')$ given in part (i) of \eqref{prop1_purely}, and part (i) follows immediately as a corollary.
\end{proof}

\begin{proof}[\textbf{Proof of Theorem \ref{thm2}}]
Using the same reasoning as in \eqref{indep_of_N=n_and_min_o.s.}, we conclude that
\[
I(\boldsymbol{N} = \boldsymbol{n}) \text{ is independent of } \left(X_{n_1(1)},\, Y_{n_2(1)}\right).
\]
Hence, the type I error probability can be written as
\begin{align*}
\mathbb{P}_{\boldsymbol{\theta},\boldsymbol{N}}\left(\text{Type I error}\right) 
&= \mathbb{E}_{\boldsymbol{\theta}}\left[\left(\frac{\sigma_1}{N_1}+\frac{\sigma_2}{N_2}\right)^{-1}
\frac{\sigma_1}{N_1}
\exp\left\{-\frac{N_1(b-\Delta_0)}{\sigma_1} \right\} \right] \\
&= \mathbb{E}_{\boldsymbol{\theta}}\left[\left(\frac{n_1^*}{N_1}+\frac{n_2^*}{N_2}\right)^{-1}
\frac{n_1^*}{N_1}
\exp\left\{-l_1\frac{N_1}{n_1^*} \right\} \right], \,\ \text{ where } l_1=d_\alpha (b-\Delta_0)
\end{align*}
Let $E$ denote the event
\[
E = \big[ (N_1' \geq T_1) \cap (N_2' < T_2)\big] 
\cup \big[ (N_1' < T_1) \cap (N_2' \geq T_2) \big] 
\cup \big[ (N_1' < T_1) \cap (N_2' < T_2) \big],
\]
for which $\mathbb{P}_{\boldsymbol{\theta}}(E) = O(q^{-s/2})$ if $m > s/2+1$. Now consider
\begin{align}
\label{diff_of_P(type-I error)}
\begin{aligned}
\big|\mathbb{P}_{\boldsymbol{\theta},\boldsymbol{N}}\left(\text{Type I error}\right)
&- \mathbb{P}_{\boldsymbol{\theta},\boldsymbol{N'}}\left(\text{Type I error}\right)\big| \\
&\leq \mathbb{E}_{\boldsymbol{\theta}}\Bigg[
\Bigg|\left(\frac{n_1^*}{N_1}+\frac{n_2^*}{N_2}\right)^{-1}
\frac{n_1^*}{N_1}
\exp\left\{-l_1\frac{N_1}{n_1^*} \right\}- \left(\frac{n_1^*}{N_1'}+\frac{n_2^*}{N_2'}\right)^{-1}
\frac{n_1^*}{N_1'}
\exp\left\{-l_1\frac{N_1'}{n_1^*} \right\}\Bigg| \, I_E \Bigg] \\
&\leq \mathbb{E}_{\boldsymbol{\theta}}\left[
\frac{n_1^*}{l_1}\left(\frac{1}{N_1} + \frac{1}{N_1'}\right) I_E
\right] \quad \text{(by \eqref{H_bound})} \\
&\leq O(n_1^*) \, \mathbb{P}_{\boldsymbol{\theta}}(E) = O(q)O(q^{*-s/2}) \\
&= o(1), \quad \text{if } m > 2.
\end{aligned}
\end{align}
Combining \eqref{diff_of_P(type-I error)} with \eqref{type-I_error_2nd_order_exp2} yields the desired result.
\end{proof}

\begin{proof}[\textbf{Proof of Theorem \ref{thm3}}]
    Proceeding in a similar fashion as in the proof of Theorem \ref{thm2}, the risk function can be expressed as 
    \begin{align*}    
    R_{\boldsymbol{N}}(\boldsymbol{\mu},\boldsymbol{c}) &= \mathbb{E}_{\boldsymbol{\theta}}\left[ A \left(\frac{\sigma_1}{N_1}+\frac{\sigma_2}{N_2}\right)^{-1}\frac{\sigma_2}{N_2}\exp\left\{-\frac{N_2(\Delta_1-b)}{\sigma_2} \right\} + c_1 N_1\sigma_1^{-1} + c_2 N_2 \sigma_2^{-1}\right] \\
    &= \mathbb{E}_{\boldsymbol{\theta}}\left[A\left(\frac{n_1^*}{N_1}+\frac{n_2^*}{N_2}\right)^{-1}\frac{n_2^*}{N_2}\exp\left\{-l_2\frac{N_2}{n_2^*} \right\}+c_1 N_1\sigma_1^{-1} + c_2 N_2 \sigma_2^{-1} \right], \,\ \text{ where } l_2=d_\alpha (\Delta_1-b).
    \end{align*}
Consider the difference
    \begin{align}
    \label{diff_of_Regret}
    \begin{aligned}    
    \big|\text{(Regret)}_{\boldsymbol{N}} - \text{(Regret)}_{\boldsymbol{N'}}
    \big| & =
    \big|R_{\boldsymbol{N}}(\boldsymbol{\mu},\boldsymbol{c})
    - R_{\boldsymbol{N'}}(\boldsymbol{\mu},\boldsymbol{c})\big| \\
    &\leq \mathbb{E}_{\boldsymbol{\theta}}\Bigg[
    \Bigg|A\left(\frac{n_1^*}{N_1}+\frac{n_2^*}{N_2}\right)^{-1}\frac{n_2^*}{N_2}\exp\left\{-l_2\frac{N_2}{n_2^*} \right\}- A\left(\frac{n_1^*}{N_1'}+\frac{n_2^*}{N_2'}\right)^{-1}\frac{n_2^*}{N_2'}\exp\left\{-l_2\frac{N_2'}{n_2^*} \right\}\\
    &\qquad +\frac{c_1}{\sigma_1} N_1 + \frac{c_2}{\sigma_2}N_2-\frac{c_1}{\sigma_1} N_1' - \frac{c_2}{\sigma_2}N_2'\Bigg| \, I_E \Bigg] \\
    &\leq \mathbb{E}_{\boldsymbol{\theta}}\left[\left\{\frac{n_2^*}{l_2}\left(\frac{1}{N_2} + \frac{1}{N_2'}\right) + \frac{c_1}{\sigma_1} (N_1+N_1') + \frac{c_2}{\sigma_2}(N_2+N_2')\right\}
     I_E \right] \\
    &\leq O(q) \, \mathbb{P}_{\boldsymbol{\theta}}(E) = o(1), \quad \text{if } m > 2.
    \end{aligned}
    \end{align}
Thus, part (ii) follows from \eqref{regret_2nd_order_exp2}, and part (i) is a direct consequence.
\end{proof}

\end{document}